\newtheorem{theorem}{Theorem}[section]
\newtheorem{corollary}[theorem]{Corollary}
\newtheorem{lemma}[theorem]{Lemma}
\newtheorem{definition}[theorem]{Definition}
\newtheorem{example}[theorem]{Example}
\newcommand*{\RN}[1]{\expandafter\@slowromancap\romannumeral #1@}
\newcommand {\eq} [1] {\begin{equation}\label{#1}}
\newcommand {\en} {\end{equation}}
\newcommand {\cA}       {{\cal A}}
\newcommand {\R}        {{\mathbb R}}
\newcommand {\mat}      [1] {\left[\begin{array}{#1}}
\newcommand {\rix}          {\end{array}\right]}
\newcommand {\diag}     {\mathop{\rm diag}\nolimits}
\newcommand {\rank}     {\mathop{\rm rank}\nolimits}
\newcommand {\range}    {\mathop{\rm range}\nolimits}
\DeclareMathOperator*{\E}{\mathbb{E}}
\newcommand{\eps}{\epsilon}
\renewcommand{\E}{{ \bf{E}}}
\DeclareMathOperator{\OPT}{OPT}
\DeclareMathOperator{\poly}{poly}
\DeclareMathOperator{\nnz}{\mathtt{nnz}}
\newcommand{\normF}[1]{{\| #1 \|}_F}
\newcommand{\norm}[1]{\lVert #1 \rVert_2}
\newcommand\ty{\tilde{y}}
\newcommand\hA{\hat{A}}
\newcommand\hb{\hat{b}}
\newcommand\hS{\hat{S}}
\newcommand\Iden{{I}}
\DeclareMathOperator*{\argmin}{argmin}
\newcommand\twomat[2]{\left[\begin{smallmatrix} #1 \\ #2 \end{smallmatrix} \right] }
\DeclareMathOperator{\sd}{\mathtt{sd}}
\newcommand{\define}[4][ignore]{%
  \ifstrequal{#1}{ignore}{}{
  \@namedef{thmtitle@#2}{#1}}%
  \@namedef{thm@#2}{#4}%
  \@namedef{thmtypen@#2}{lemma}%
  \newtheorem{thmtype@#2}[theorem]{#3}%
  \newtheorem*{thmtypealt@#2}{#3~\ref{#2}}%
}
\newcommand{\state}[1]{%
  \@namedef{curthm}{#1}
  \@ifundefined{thmtitle@#1}{
  \begin{thmtype@#1}
    }{
  \begin{thmtype@#1}[\@nameuse{thmtitle@#1}]
  }
    \label{#1}
    \@nameuse{thm@#1}
  \end{thmtype@#1}
  \@ifundefined{thmdone@#1}{
  \@namedef{thmdone@#1}{stated}%
  }{}
}
\newcommand{\restate}[1]{%
  \@namedef{curthm}{#1}
  \@ifundefined{thmtitle@#1}{
    \begin{thmtypealt@#1}
    }{
  \begin{thmtypealt@#1}[\@nameuse{thmtitle@#1}]
  }
    \@nameuse{thm@#1}
  \end{thmtypealt@#1}
  \@ifundefined{thmdone@#1}{
  \@namedef{thmdone@#1}{stated}%
  }{}
}
\newcommand{\thmlabel}[1]{
  \@ifundefined{thmdone@\@nameuse{curthm}}{\label{#1}
    }{\tag*{\eqref{#1}}}
}
\begin{document}

\runningauthor{Diao, Song, Sun, Woodruff}
%

%

\twocolumn[

\aistatstitle{Sketching for Kronecker Product  Regression and P-splines}

\aistatsauthor{
  Huaian Diao \\
  \texttt{hadiao@nenu.edu.cn}\\
  \And
  Zhao Song \\
  \texttt{zhaos@utexas.edu}\\
  \And 
  Wen Sun \\
  \texttt{wensun@cs.cmu.edu} \\
  \And
  David P. Woodruff \\
  \texttt{dwoodruf@cs.cmu.edu}
}

\aistatsaddress{Northeast Normal University \And Harvard U \& UT-Austin \And Carnegie Mellon University \And Carnegie Mellon University } 
]

\begin{abstract}
   {\sc TensorSketch} is an oblivious linear sketch introduced in \citep{p13} and later used in \citep{pp13} in the context of SVMs for polynomial kernels. It was shown in \citep{anw14} that {\sc TensorSketch} provides a subspace embedding, and therefore can be used for canonical correlation analysis, low rank approximation, and principal component regression for the polynomial kernel. We take {\sc TensorSketch} outside of the context of polynomials kernels, and show its utility in applications in which the underlying design matrix is a Kronecker product of smaller matrices. This allows us to solve Kronecker product regression and non-negative Kronecker product regression, as well as regularized spline regression. Our main technical result is then in extending {\sc TensorSketch} to other norms. That is, {\sc TensorSketch} only provides input sparsity time for Kronecker product regression with respect to the $2$-norm. We show how to solve Kronecker product regression with respect to the $1$-norm in time sublinear in the time required for computing the Kronecker product, as well as for more general $p$-norms.

\end{abstract}

\section{INTRODUCTION}
\label{Sec:intro}
In the overconstrained least squares regression problem, we are given an $n \times d$ matrix
$A$ called the ``design matrix'', $n \gg d$, and
an $n \times 1$ vector $b$, and the goal is to find an $x$ which minimizes $\|Ax-b\|_2$.
There are many variants to this problem, such as regularized versions of the problem in which one
seeks an $x$ so as to minimize $\|Ax-b\|_2^2 + \|Lx\|_2^2$ for a matrix $L$, or regression problems
which seek to minimize more robust loss functions, such as $\ell_1$-regression $\|Ax-b\|_1$. 

In the era of big data, large scale matrix computations have attracted considerable interest. To obtain
reasonable computational and time complexities for large scale matrix computations, a number of randomized
approximation algorithms have been developed.
For example, in  \citep{w14}, it was shown how to output a vector
$x' \in \mathbb{R}^d$ for which $\|Ax'-b\|_2 \leq (1+\epsilon)\min_{x \in \mathbb{R}^d}\|Ax-b\|_2$ in
$\nnz(A) + \poly(d/\epsilon)$ time, where $\nnz(A)$ denotes the number of non-zero entries of matrix $A$.
We refer the reader to the recent surveys \citep{kv09,m11,w14} for a detailed
treatment of this topic. 

In this work we focus on regression problems for which the design matrix is a 
{\it Kronecker product matrix}, that is, it has the form $A \otimes B$ for $A \in \mathbb{R}^{n_1 \times d_1}$
and $B \in \mathbb{R}^{n_2 \times d_2}$. Also, $b \in \mathbb{R}^{n_1 n_2}$, and one seeks to solve the problem
$\min_{x \in \mathbb{R}^{d_1 d_2}} \|(A \otimes B)x-b\|_2$ in the standard setting, which can also be generalized
to regularized and robust variants. One can also ask the question when the design matrix is a Kronecker
product of more than two matrices.

Kronecker product matrices have many applications in statistics, linear system theory, signal processing,
photogrammetry, multivariate data fitting, 
etc.; see \citep{golubVanLoan2013Book,vanLoanKron,vanbook}. The linear least squares problem involving the Kronecker product arises in many applications,
such as structured linear total least norm on blind deconvolution problems \citep{oh2005}, constrained linear least squares
problems with a Kronecker product structure, the bivariate problem of surface fitting and multidimensional density smoothing \citep{eilers2006multidimensional}.

One way to solve Kronecker product regression is to form the matrix $C = A \otimes B$ explicitly, where $A\in\mathbb{R}^{n_1\times d_1}, B\in\mathbb{R}^{n_2\times d_2}$, and then
apply a randomized algorithm to $C$. However, this takes at least $\nnz(A) \cdot \nnz(B)$
time and space. It is natural to ask if it is possible to 
solve the Kronecker product regression problem in time faster than computing $A \otimes B$. This is in fact
the case, as Fausett and Fulton \citep{Fausett1994} show that one can solve Kronecker product regression in
$O(n_1 d_1^2 + n_2 d_2^2)$ time; indeed, the solution vector $x = \textrm{vec}((B^{\perp})^\top D^{-1}A^{\perp})$, where
$D = \textrm{vec}(b)$ and vec$(E)$ for a matrix $E$ denotes the operation of stacking the columns of $E$ into
a single long vector. While such a computation does not involve computing $A \otimes B$, it is more
expensive than what one would like.

A natural question is if one can approximately solve Kronecker product
regression in $\nnz(A) + \nnz(B) + \poly(d/\epsilon)$ time, which, up to the $\poly(d/\epsilon)$ term, would
match the description size of the input. Another natural question is
Kronecker product regression with regularization.
Such regression problems arise frequently in the context of splines \citep{eilers2006multidimensional}. Finally, what about Kronecker product
regression with other, more robust, norms such as the $\ell_1$-norm?
\vspace{-2mm}
\subsection{Our Contributions}
\vspace{-2mm}
We first observe that the random linear map {\sc TensorSketch}, introduced in the context of problems for
the polynomial kernel by \cite{p13}, \cite{pp13}, is exactly suited for this task. Namely,
in \citep{anw14} it was shown that for a $d$-dimensional subspace $C$ of $\mathbb{R}^n$, represented as
an $n \times d$ matrix, there is a distribution on linear maps $S$ with $O(d^2/\epsilon^2)$ rows
such that with constant probability,
simultaneously for all $x \in \mathbb{R}^d$, $\|SCx\|_2 = (1 \pm \epsilon)\|Cx\|_2$. That is, $S$
provides an {\it Oblivious Subspace Embedding (OSE)} for the column span of $C$. Further, it is known that if $b$
is an $n$-dimensional vector, then one has that
$\|S[C,b]x\|_2 = (1 \pm \epsilon)\|[C,b]x\|_2$ for all $x \in \mathbb{R}^{d+1}$. 
Consequently,
to solve the regression problem $\min_{x \in \mathbb{R}^d}\|Cx-b\|_2$, one can instead solve the much
smaller problem $\min_{x \in \mathbb{R}^d}\|SCx-Sb\|_2$, and the solution $x'$ to the latter problem
will satisfy $\|Cx'-b\|_2 \leq (1+\epsilon)\min_{x \in \mathbb{R}^d}\|Cx-b\|_2$. 

Importantly, if $n = n_1 \cdot n_1$
and there is a basis for the column span of
$C$ of the form $A_1 \otimes A_1, A_2 \otimes A_2, \ldots, A_d \otimes A_d$,
then given $A_1, \ldots, A_d$, it holds that $SC$ can be computed in $\nnz(A)$ time, where $A$ is
the $n_1 \times d$ matrix whose $i$-th column is $A_i$. Further, given a vector $b$ with $\nnz(b)$
non-zero entries, one can compute $Sb$ in $\nnz(b)$ time. Thus, one obtains a $(1+\epsilon)$-approximate
solution to the regression $\min_{x \in \mathbb{R}^d}\|Cx-b\|_2$ in
$\nnz(A) + \nnz(b) + \poly(d/\epsilon)$ time. 

While not immediately useful for our problem, we show that via simple modifications, the claim
about {\sc TensorSketch} 
above can be generalized to the case when there is a basis for the column span of $C$ of the
form $A_i \otimes B_j$ for arbitrary vectors $A_1, \ldots, A_{d_1} \in \mathbb{R}^{n_1}$ and
vectors $B_1, \ldots, B_{d_2} \in \mathbb{R}^{n_2}$. That is, we observe that
in this case $SC$ can be computed in $\nnz(A) + \nnz(B)$ time, where $A$ is the $n_1 \times d_1$
matrix whose $i$-th column is $A_i$, and $B$ is the $n_2 \times d_2$ matrix whose $i$-th column is $B_i$.
In this case $C = A \otimes B$, which is exactly the case of Kronecker product regression. Using the
above connection to regression, we obtain an algorithm for Kronecker product regression in
$\nnz(A) + \nnz(B) + \nnz(b) + \poly(d_1 d_2/\epsilon)$ time. Using the fact that
$\|SCx-Sb\|_2 = (1 \pm \epsilon)\|Cx-b\|_2$ for all $x$, we have in particular that this holds for
all non-negative $x$, and so also obtain the same reduction in problem size for
non-negative Kronecker product regression, which occurs often in image and text
data; see e.g., \citep{chen2012}.

The above observation allows us to extend many existing variants of least squares regression to the
case when $C$ is a Kronecker product of two matrices. For example, the results in \citep{acw16a} for the ridge
regression problem $\min_{x \in \mathbb{R}^d}\|Ax-b\|_2^2 + \lambda \|x\|_2^2$
immediately hold when $C$ is a Kronecker product matrix, since the conditions needed
for the oblivious embedding in \citep{acw16a} come down to a subspace embedding and an approximate matrix
product condition, both of which are known to hold for {\sc TensorSketch} \citep{anw14}. More interestingly
we are able to extend the results in \citep{acw16a} for Kronecker ridge regression to the case when the regularizer
is a general matrix, namely, to the problem $\min_{x \in \mathbb{R}^d} \|Ax-b\|_2^2 + \lambda \|Lx\|_2^2$, where
$L$ is an arbitrary matrix. Such problems occur in the context of spline regression \citep{eilers96,eilers2006multidimensional,eilers15}.
The number of rows of {\sc TensorSketch} depends on a generalized
notion of statistical dimension depending on the generalized singular values $\gamma_i$ of $[A; L]$, and
may be much smaller than $d_1$ or $d_2$.
In \citep{acw16a}, only results for $L$ equal to the identity were obtained.

Finally, our main technical result is to extend our results to
least absolute deviation Kronecker product regression $\min_{x \in \mathbb{R}^d}\|Cx-b\|_1$, which, since it involves
the $1$-norm, is often considered more robust than least squares regression \citep{rousseeuw2005robust} and has been widely used in applications such as computer vision \citep{zheng2012practical}. We in fact extend this to general $p$-norms
but focus the discussion on $p = 1$. We give the first algorithms
for this problem that are faster than computing $C = A \otimes B$ explicitly, which would take at least
$\nnz(A) \cdot \nnz(B) \geq n_1 n_2$ time. Namely, and for simplicitly focusing on the case when $n_1 = n_2$,
for which the goal is to do better than $n_1^2$ time, 
we show how to output an $x \in \mathbb{R}^{d_1 \times d_2}$ for which
$\|Cx'-b\|_1 \leq (1+\epsilon) \min_{x \in \mathbb{R}^{d_1 d_2}}\|Cx-b\|_1$ in time
$n^{3/2} \poly(d_1 d_2/\epsilon)$. While this is more expensive than solving Kronecker product least squares,
the $1$-norm may lead to significantly more robust solutions. From a technical perspective, {\sc TensorSketch}
when applied to a vector actually destroys the $1$-norm of a vector, preserving only its $2$-norm, so new
ideas are needed. We show how to use multiple {\sc TensorSketch} matrices to implicitly
obtain very crude estimates to the so-called $\ell_1$-leverage scores of $C$, which can be interpreted as
probabilities of sampling rows of $C$ in order to obtain an $\ell_1$-subspace embedding of the column span
of $C$ (see, e.g., \citep{w14}). However, since $C$ has $n_1 n_2$ rows, we cannot even afford to write down the $\ell_1$-leverage
scores of $C$. We show how to implicitly represent such leverage scores and to sample from them without ever
explicitly writing them down. Balancing the phases of our algorithm leads to our overall time.
\vspace{-2mm}
\subsection{Notation}
\vspace{-2mm}
We consider Kronecker Product of $q$ 2-d matrices $A_1\otimes A_2\otimes ...\otimes A_q$, where each $A_i\in\mathbb{R}^{n_i\times d_i}$. We denote $\mathcal{A} = A_1\otimes A_2\otimes ...\otimes A_q$, $n = \prod_{i=1}^q n_i$, and $d = \prod_{i=1}^q d_i$.  We denote $[n]$ as the set $\{ 1,2,3, \cdots ,n \}$. The $\ell_p$ norm for a vector $x\in\mathbb{R}^d$ is defined as $\|x\|_p = (\sum_{i=1}^d | x_i |^p)^{1/p}$, where $x_i$ stands for the $i$'th entry of the vector $x$.   For any matrix $M$, we use $M_{i,*}$ to represent the $i$'th row of $M$ and $M_{*,j}$ as the $j$'th column of $M$.


We define a \emph{Well-Conditioned Basis} and \emph{Statistical Dimension} as follow (similar definitions can be found in \cite{c05,ddhkm09,sw11,mm13,swz17,swz17b} and \cite{acw16a}):
\begin{definition} [Well-Conditioned Basis]
\label{def:wcb}
Let $A$ be an $n\times m$ matrix with rank $d$, let $p\in[1,\infty)$, and let $\|\cdot\|_q$ be the dual norm of $\|\cdot\|_p$, i.e., $1/p+1/q = 1$. Then an $n\times d$ matrix $U$ is an $(\alpha,\beta,p)$-well-conditioned basis for the column space of $A$, if the columns of $U$ span the column space of $A$ and (1) $\|U\|_p\leq \alpha$, and (2) for all $z\in\mathbb{R}^d$, $\|z\|_q\leq \beta\|Uz\|_p$.
\end{definition}

Consider the classic Ridge Regression: $\min_x\|Ax - b\|^2_2 + \lambda\|x\|_2^2$. The \emph{Statistical Dimension} is defined as:  
\begin{definition} [Statistical Dimension]
Let $A$ be an $n\times m$ matrix with rank $d$ and singular values $\sigma_i,i\in [d]$. For $\lambda \geq 0$, the statistical dimension $\sd_\lambda(A)$ is defined as the quantity $\sd_\lambda(A) = \sum_{i\in[d]}1/(1+\lambda/\sigma_i^2)$.
\end{definition}


\vspace{-2mm}
\section{BACKGROUND:{\sc TensorSketch} }\label{sec:background}
\vspace{-2mm}
We briefly introduce \textsc{TensorSketch} \citep{p13,anw14} and how to apply \textsc{TensorSketch} to the Kronecker product of multiple matrices efficiently without explicitly computing the tensor product.\footnote{We refer readers to \citep{anw14} for more details about \textsc{TensorSketch}.}

We want to find a oblivious subspace embedding $S$ such that for any $x\in\mathbb{R}^{n}$, we have $\|S\mathcal{A}x\|_2 = (1\pm \epsilon)\|\mathcal{A}x\|_2$, where the notation $a=(1\pm\epsilon b)$ stands for $(1-\epsilon)b \leq a\leq (1+\epsilon) b$, for any $a,b\in\mathbb{R}$. Consider the $(i_1,i_2,...,i_q)$'th column of $\mathcal{A}$ ($i_j\in [d_j]$): $A_{1{_{*,i_1}}}\otimes A_{2_{*,i_2}}\otimes \cdots \otimes A_{{q}_{*,i_q}}$.  Assume the sketching target dimension is $m$. {\sc TensorSketch} is defined using $q$ 3-wise indepedent hash functions $h_i: [n_i]\to [m]$, and $q$ 4-wise independent sign functions $s_i: [n_i]\to \{-1,1\}$, $\forall i\in[q]$. Define hash function $H: [n_1] \times [n_2] \times \cdots \times [n_q]\to [m]$ as $H(i_1,i_2,...,i_q) = ((\sum_{k=1}^q h_k(i_k))\mod m) $, and sign function $S: [n_1] \times [n_2] \times \cdots \times [n_q]\to \{-1,1\}$ as $S(i_1,i_2,..., i_q) =\prod_{k=1}^q s_k(i_k)$.  Applying {\sc Tensorsketch} to the Kronecker product of vectors $A_{1_{*,i_1}}, ..., A_{q_{*,i_q}}$ is equivalent to applying {\sc CountSketch} \citep{ccf04} defined with $H$ and $S$ to the vector $(A_{1_{*,i_1}}\otimes ...\otimes A_{q_{*,i_q}})$.  To apply {\sc Tensorsketch} to $\mathcal{A}$, we just need to apply {\sc CountSketch} defined with $H$ and $S$ to the columns of $\mathcal{A}$ one by one. 

Applying {\sc tensorsketch} to  the Kronecker product of $(A_{1_{*,i_1}}, ...,  A_{q_{*,i_q}})$ naively would require at least $O(n)$ time. \cite{p13} shows that one can apply {\sc tensorsketch} to the Kronecker product of $(A_{1_{*,i_1}}, ...,  A_{q_{*,i_q}})$ without explicitly computing the Kronecker product of these vectors using the Fast Fourier Transformation. Particularly, \cite{pp13} show that one only needs $O( \sum_{j=1}^q \nnz(A_{j_{*,i_j}})+ qm\log (m))$ time to compute $S(A_{1_{*,i_1}}\otimes ...\otimes A_{q_{*,i_q}})$ where $A_{i_{*,j}}$ stands for the $j$'th column of $A_i$. As $\mathcal{A}$ has $d$ columns, computing $S\mathcal{A}$ takes $O(d(\sum_{i=1}^q \nnz(A_i)) + dqm\log(m))$ time, which is much smaller than $O(\prod_{i=1}^q n_i)$, which is the least amount of time one needs for explicitly computing $\mathcal{A}$. In the rest of  the paper, we  assume that we compute $S\mathcal{A}$ using the above efficient procedure without explicitly computing $A_1\otimes \ldots \otimes A_q$.

\section{TENSOR PRODUCT LEAST SQUARES REGRESSION}
\label{Sec:TensorRegression}

Consider the tensor product least squares regression problem  $\min_{x}\|\mathcal{A}x-b\|_2$ where $\mathcal{A} \in \mathbb{R}^{n\times d}$ and $b\in \R^{n}$.  Let $S \in \R^{m\times n}$ be the matrix form of {\sc TensorSketch} of Section \ref{sec:background}. We propose a {\sc TensorSketch} -type Algorithm \ref{alg:tensorregression} for the tensor product regression problem. The following theorem shows that the solution obtained from Alg.~\ref{alg:tensorregression} is a good approximation of the optimal solution of the original tensor product regression. 

Let us define $\OPT$ to be the optimal cost of the optimization problem, e.g., $\OPT  = \min_{x} \| \cA x- b \|_2$. The following theorem shows that Alg.~\ref{alg:tensorregression} computes a good approximate solution.

\begin{algorithm}[tb]
 \caption{Tensor product regression}
  \label{alg:tensorregression}
  \begin{algorithmic}[1]
  \Procedure{TRegression}{$A,b,\epsilon,\delta$}
    \State $m \leftarrow (d_1d_2\cdots d_q +1)^2 (2+3^q)/(\eps^2\delta)$
    \State Choose $S$ to be an $m \times (n_1n_2\cdots n_q)$ {\sc TensorSketch} matrix
    \State Compute $S(A_1\otimes A_2\otimes \cdots \otimes A_q)$ and $Sb$
    \State $ \widetilde { x} \leftarrow \min_{x}\|S(A_1\otimes A_2\otimes \cdots \otimes A_q)x-Sb\|_2$
    \State \Return $\widetilde{x}$
\EndProcedure

    \end{algorithmic}
\end{algorithm}

\define{them:norm_two_reg}{Theorem}{ \rm{(Tensor regression)}
	Suppose $ \widetilde { x}$ is the output of Algorithm \ref{alg:tensorregression} with {\sc tensorsketch} $S\in \mathbb{R}^{m\times n}$, where $m= 8(d_1d_2\cdots d_q +1)^2 (2+3^q)/(\eps^2\delta)$. Then the following approximation
	$
\|(A_1\otimes A_2\otimes \cdots \otimes A_q) \widetilde { x}-b\|_2	 
\leq (1+\epsilon) \OPT,	
	$	holds with probability at least $1-\delta $.
}
\state{them:norm_two_reg}
The proof of Theorem~\ref{them:norm_two_reg} can be found in Appendix ~\ref{sec:proof_norm_two}.
Theorem~\ref{them:norm_two_reg} shows that we can achieve an $\epsilon$-close solution by solving a much smaller regression problem with a number of samples of order $O(\poly(d/\epsilon))$, which is independent of the large dimension $n$. Using the technique we introduced in Sec.~\ref{sec:background}, we can also compute $S\cA$ without explicitly computing the tensor product.

We can extend Theorem~\ref{them:norm_two_reg} to the nonnegative tensor product regression problem
$
	\min_{x \geq 0 }\|(A_1\otimes A_2\otimes \cdots \otimes A_q)x-b\|_2,
$
where $A_i\in \R^{n_i \times d_i},\, i=1,\ldots, q$ and $b\in \R^{n_1n_2\cdots n_q}$. Suppose $ x$ is the optimal solution. Similarly, let $S \in \R^{m\times (n_1 n_2\cdots n_q)}$ be the matrix form of {\sc TensorSketch} of Section \ref{sec:background}.  If $ \widetilde { x} $ is the optimal solution  to
 $
	\min_{x \geq 0}\|S(A_1\otimes A_2\otimes \cdots \otimes A_q)x-Sb\|_2.
$, we have the following:

%
%
%
%
%
%
%
%
%
\define{them:norm_two_reg_positive}{Corollary}{ \rm{(Sketch for tensor nonnegative regression)}
	Suppose $\tilde{x} = \min_{x\geq 0} \|S\mathcal{A}x-Sb\|_2 $ with {\sc tensorsketch} $S\in \mathbb{R}^{m\times n}$, where $m=8 (d_1d_2\cdots d_q +1)^2 (2+3^q)/(\eps^2\delta)$.  Then the following approximation
	$
\|(A_1\otimes A_2\otimes \cdots \otimes A_q) \widetilde { x}-b\|_2	 
\leq  (1+\epsilon)  \OPT	
	$	holds with probability at least $1-\delta $, 
	where $\OPT = \min_{x\geq 0}
\left\|(A_1\otimes A_2\otimes \cdots \otimes A_q)  { x}-b\right\|_2$.
}
\state{them:norm_two_reg_positive}
 The proof of Corollary~\ref{them:norm_two_reg_positive} can be found in Appendix~\ref{sec:proof_norm_two}.

\vspace{-2mm}
\section{ P-SPLINES }
\vspace{-2mm}
B-splines are local basis functions, consisting of low degree (e.g., quadratic, cubic) polynomial segments. The positions where the segments join are called the knots. B-splines have local support and are thus suitable for smoothing and interpolating data with complex patterns. 
Unfortunately, control over smoothness is limited: one can only change the number and positions of the knots. If there are no reasons to assume that smoothness is non-uniform, the knots will be equally spaced and the only tuning parameter is their (discrete) number. 
In contrast P-spline \citep{eilers96} equally spaces B-splines, discards the derivative completely, and controls smoothness by regularizing the sum of squares of differences of coefficients. Specifically Eilers and Marx \cite{eilers96} proposed the P-spline recipe: (1) use a (quadratic or cubic) B-spline basis with a large number of knots, say 10-50; (2) introduce a penalty on (second or third order) differences of the B-spline coefficients; (3) minimize the resulting penalized likelihood function; (4) tune smoothness with the weight of the penalty, using cross-validation or AIC to determine the optimal weight.

We give a brief overview of B-Splines and P-Splines below. Let $b$ and $u$, each vectors of length $n$, represent the observed and explanatory variables, respectively. Once a set of knots is chosen, the B-spline basis $A$ follows from $u$. If there are $d$ basis functions then $A$ is $n \times  d$. In the case of normally distributed observations the model is $b= A x + e$, with independent errors $e$. In the case of B-spline regression the sum of squares of residuals $\left\|b- A x \right\|_2$ is minimized and the normal equations $A^\top A \hat x = A^\top b$ are obtained; the explicit solution  $\hat x = (A^\top A)^{-1} A^\top b$ results. The P-spline approach minimizes the penalized least-squares function
\begin{equation}\label{eq ridge1}
	\left\|b- A x \right\|_2^2+\lambda \|Lx\|_2^2,
\end{equation}
where $L \in  \R^{p\times n}$ is a matrix that forms differences of order $\ell $, i.e., $L_\ell x=\Delta^\ell x $. Examples of this matrix, for $\ell =1$ and $\ell =2$ are :
\begin{align*}
		L_1=\begin{bmatrix}
	-1 & 1 & 0 &0 \\
	0 &-1 & 1 & 0\\
	0 & 0 &-1 &1
\end{bmatrix}, L_2=\begin{bmatrix}
	1 & -2 &1 & 0 &0 \\
	0 & 1&-2 & 1 & 0\\
	0 & 0 & 1&-2 &1
\end{bmatrix}.
\end{align*}

The parameter  $\lambda$ determines the influence of the penalty. If $\lambda$  is zero, we are back to B-spline regression; increasing $\lambda$  makes  $\hat x$, and hence  $\hat b = A \hat x$, smoother. 

Let
$x^*$ denote $\argmin_{x\in\R^d} \norm{Ax-b}^2 + \lambda\norm{L x}^2$, and $\OPT$ denote $\norm{Ax^*-b}^2 + \lambda\norm{Lx^*}^2$.
In general $x^* = (A^\top A + \lambda L^\top L)^{-1} A^\top b = A^\top (A A^\top + \lambda LL^\top )^{-1}b$,
so $x^*$ can be found in $O(\nnz(A) \min(n, d))$ time using an iterative method (e.g., LSQR). 
Our first goal in this section is to design faster algorithms that find an approximate
$\widetilde{x}$ in the following sense:
\begin{equation}
\label{eq:goal ridge regression}
\norm{A\widetilde{x}-b}^2 + \lambda\norm{L \widetilde{x}}^2\le (1+\eps)\OPT. \vspace{-4mm}
\end{equation}
\vspace{-5mm}
\subsection{Sketching for P-Spline}
\vspace{-2mm}
We first introduce a new definition of \emph{Statistical Dimension} that extends the statistical dimension defined for Ridge Regression (i.e., $L$ is an identity matrix in Eq.~\ref{eq ridge1}) \citep{acw16a} to P-Spline regression. 

The problem (\ref{eq ridge1}) can also be analyzed  by 
generalized singular value decomposition
(GSVD) \cite{golubVanLoan2013Book}. For  matrices
$A \in \R^{n \times d}$ and $L \in \R^{p \times d}$ with
$\rank(L)=p$ and $\rank\left(\mat{c} A \\
L \rix\right) =d$, the GSVD of $(A,L)$ is given by the pair of factorizations
$
A=U\begin{bmatrix}\Sigma& { 0}_{p\times (n-p)}\cr { 0}_{(n-p)\times p}& I_{d-p}
\end{bmatrix}RQ^\top  \quad \hbox{and} \quad
L=V\begin{bmatrix}\Omega  &{
0}_{p \times (n-p)}\end{bmatrix}RQ^\top, 
$
where $U \in \R^{m\times n}$ has orthonormal columns,
$V \in \R^{p \times p}$, $Q\in \R^{d \times d} $ are orthogonal, $R\in\mathbb{R}^{d\times d}$ is
upper triangular and nonsingular, and $\Sigma$ and $\Omega$ are
$p\times p$ diagonal matrices:
$\Sigma=\diag(\sigma_1,\sigma_2,\ldots,\sigma_{p})$ and
$\Omega=\diag(\mu_1,\mu_2,\ldots,\mu_p)$ with
$
0 \leq \sigma_1 \leq \sigma_2 \leq \ldots \leq \sigma_p < 1
\quad \hbox{and} \quad
 1 \geq \mu_1 \geq \mu_2 \geq \ldots \geq \mu_p>0,
$
satisfying $\Sigma^\top \Sigma+\Omega^\top \Omega=I_{p}$.  The
{\em generalized singular values $\gamma_i$} of $(A,L)$ are defined
by the ratios $\gamma_i=\sigma_i/\mu_i$ ($i=[p]$).

In this section we design an algorithm that is aimed at the case when 
$n \gg d$. The general strategy
is to design a distribution on matrices of size $m$-by-$n$ ($m$ is a parameter), 
sample an $S$ from that distribution, and solve 
$\widetilde{x} \equiv \argmin_{x\in\R^d} \norm{S(Ax-b)}^2 +  \lambda\norm{Lx}^2\,$. 

The following lemma defines conditions on the distribution of $S$ that guarantees 
 Eq.~\eqref{eq:goal ridge regression} holds with constant probability
(which can be boosted to high probability by repetition and taking the minimum
objective).

\define{lem:reg}{Lemma}{
Let $x^*\in\R^d$, $A \in \R^{n\times d}$ and $b \in \R^n$ as above.
Let $U_1\in\R^{n\times d}$ denote the first $n$ rows of an orthogonal basis
for $\twomat{A}{\sqrt{\lambda} L } \in \R^{ (n+p) \times d}$.
Let sketching matrix $S\in\R^{m\times n}$ have a distribution such that with constant probability
\begin{align*}
\mathrm{(\RN{1})} ~
\norm{
U_1^\top S^\top S U_1 - U_1^\top U_1}\le 1/4,
\end{align*}
and
\begin{align*}
\mathrm{(\RN{2})} ~
\| U_1^\top (S^\top S -I) (b-Ax^*) \|_2 \le \sqrt{\eps\OPT/2}.
\end{align*}
Let $\widetilde{x}$ denote $\argmin_{x\in\R^d} \norm{S(Ax-b)}^2 +  \lambda\norm{Lx}^2$.
Then with probability at least $9/10$,
\begin{align*}
\norm{A\widetilde{x}-b}^2 + \lambda\norm{L\widetilde{x}}^2\le (1+\eps)\OPT.
\end{align*}
}
\state{lem:reg}

Define the \emph{statistical dimension}  for P-Splines as follows:
\begin{definition}[Statistical Dimension for P-Splines]
For S-Spline in Eq.~\eqref{eq ridge1}, the statistical dimension is defined as $\sd_\lambda(A,L)   =  \sum_i 1/(1+ \lambda/\gamma_i^2)+d-p$.
\end{definition}
The following theorem shows that  there is a sparse subspace embedding matrix $S \in \mathbb{R}^{m\times n}$ (e.g., {\sc CountSketch}), with $m \geq K(\sd_\lambda(A,L)/\epsilon + \sd_\lambda(A,L)^2)$,  that satisfies Property (\RN{1}) and (\RN{2}) of Lemma~\ref{lem:reg}, and hence achieves an $\epsilon-$approximation solution to problem~\ref{eq ridge1}:

\define{thm:size_of_S}{Theorem}{ \rm{(P-Spline regression) }
 There is a constant $K>0$ such that for  $m \ge K(\eps^{-1} \sd_\lambda(A,L) + \sd_\lambda(A,L)^2)$ and $S\in\R^{m\times n}$ a sparse embedding matrix (e.g., {\sc Countsketch}) with $SA$
  computable in $O(\nnz(A))$ time, Property (\RN{1}) and (\RN{2}) of Lemma~\ref{lem:reg} apply, and with constant probability
the corresponding $\widetilde{x} = \argmin_{x\in\R^d} \norm{S(Ax-b)} + \lambda\norm{Lx}^2$
is an $\eps$-approximate solution to $\min_{x\in\R^d} \norm{b-Ax}^2 + \lambda\norm{Lx}^2$.
}
\state{thm:size_of_S}
Note  $\sd_\lambda(A,L)$ is   upper bounded by $d$. The above theorem shows that the statistical dimension allows us to design smaller sketch matrices whose size only depends on $O(\poly(\sd_\lambda(A,L)/\epsilon))$ instead of $O(\poly(d/\epsilon))$, without sacrificing the approximation accuracy. 


\vspace{-3mm}
\subsection{Tensor Sketching for Multi-Dimensional P-Spline}
\begin{algorithm}[tb]
 \caption{P-Spline Tensor product regression}
  \label{alg:tensorregression_p}
  \begin{algorithmic}[1]
  \Procedure{PTRegression}{$A,b,K,L,\epsilon,\delta$}
    \State $m \leftarrow K(\epsilon^{-1}\sd_\lambda(\mathcal{A},L) +\sd_\lambda(\mathcal{A},L)^2)$
    \State Choose $S$ to be a $m \times n$ {\sc TensorSketch} matrix
    \State Compute $S(A_1\otimes A_2\otimes \cdots \otimes A_q)$ and $Sb$
    \State $ \widetilde { x} = \argmin_{x\in\R^d} \norm{S(Ax-b)}^2 +  \lambda\norm{Lx}^2$.
    \State \Return $ \widetilde{x}$
    \EndProcedure   
     \end{algorithmic}
\end{algorithm}
\vspace{-2mm}
Tensor products allow a natural extension of one-dimensional P-spline smoothing to multi-dimensional P-Spline. We focus on 2-dimensional P-Spline  but our results can be generalized to the multi-dimensional setting. Assume that in addition to $u$ we have a second explanatory variable $v$. We have data triples $(u_i, v_j , b_{{(i-1)\cdot n_2+j}} )$ for $i = 1, \ldots, n_1$ and $j=1,\ldots, n_2$.  We seek a smooth surface $f(u,v)$ which gives a good approximation to the response $b$. Let $A_1$, $n\times d_1$, be a B-spline basis along $u$, and $ A_2$, $n\times d_2$, be a B-spline basis along $v$. We form the tensor product basis as $A_1\otimes A_2$. When $n_1$ and $n_2$ are large, we do not  compute $A_1\otimes A_2$. We apply {\sc tensorsketch} here to avoid explicitly forming $A_1\otimes A_2$ to speed up computation. 


Let $X = [ x_{kl}]$ be a $d_1 \times  d_2$ matrix of coefficients. Then, for given $X$,  the value fit at $(u, v)$ is
$
	f(u,v)=\sum_k \sum_l A_{2,k}(v)A_{1,l}(u) x_{kl}
$
and so $X$ may be chosen using least squares by minimizing
$
	\sum_{i,j}[b_{{(i-1)\cdot n_2+j}} - f(u_i,v_i)]^2 =\sum_i\left[b_{{(i-1)\cdot n_2+j}} -\sum_k \sum_l A_{2,k}(v_i)A_{1,l}(u_i) x_{kl} \right]^2.
$
Using Kronecker product, the above minimization can be written in the  form $
	\min \left\|b-\cA x\right\|_2, 
	$
where $\cA=A_1 \otimes A_2  \in \R^{n_1 n_2 \times d_1 d_2 }
$  and $x=\textrm{ vec}(X)$. Again the P-spline approach minimizes the penalized least-squares function
$
	\left\|b- (A_1 \otimes A_2) x \right\|_2^2+\lambda \|L x\|_2^2.
$
 Consider the tensor p-spline regression problem
$
	\min_{x}\|(A_1\otimes A_2\otimes \cdots \otimes A_q)x-b\|_2^2+ \lambda\norm{Lx}^2,
$
where $L\in \R^{p\times n}$, $A_i\in \R^{n_i \times d_i},\, i=1,\ldots, q$ and $b\in \R^{n}$. Let $S \in \R^{m\times n}$ be the matrix form of {\sc TensorSketch} of Section \ref{sec:background}. Algorithm~\ref{alg:tensorregression_p} summarizes  the procedure for efficiently solving multi-dimensional P-Spline. 

Let $\cA=A_1\otimes A_2\otimes \cdots \otimes A_q$. Replacing the matrix $A$ in Theorem~\ref{thm:size_of_S} with $\mathcal{A}$, we have the following corollary for multi-dimensional P-Spline: 

\begin{corollary}[P-Spline tensor regression]\label{cor size of S}
Suppose $\lambda \leq \sigma_1^2/\epsilon$. 
 There is a constant $K>0$ such that for  $m \ge K(\eps^{-1} \sd_\lambda(\cA,L) + \sd_\lambda(\cA,L)^2)$ and $S\in\R^{m\times n}$ a {\sc TensorSketch}  matrix  with $S\cA$
  computable in $O(\nnz(\cA))$ time, Property (\RN{1}) and Property (\RN{2}) of Lemma~\ref{lem:reg} apply, and with constant probability
the corresponding $\widetilde{x} = \argmin_{x\in\R^d} \norm{S(\cA x-b)} + \lambda\norm{Lx}^2$
is an $\eps$-approximate solution to $\min_{x\in\R^d} \norm{b-\cA x}^2 + \lambda\norm{Lx}^2$.
\end{corollary}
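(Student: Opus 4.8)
The plan is to derive Corollary~\ref{cor size of S} directly from Theorem~\ref{thm:size_of_S}, substituting the Kronecker product $\cA = A_1 \otimes \cdots \otimes A_q$ for the design matrix $A$ and a {\sc TensorSketch} $S$ for the generic sparse embedding. First I would observe that neither Lemma~\ref{lem:reg} nor the reduction behind Theorem~\ref{thm:size_of_S} uses any internal structure of $A$: Lemma~\ref{lem:reg} reduces the guarantee~\eqref{eq:goal ridge regression} to the two deterministic conditions $(\RN{1})$ and $(\RN{2})$ on $S$, stated relative to $U_1$, the top $n$ rows of an orthonormal basis of $\twomat{\cA}{\sqrt{\lambda} L}$, and these conditions are meaningful verbatim with $\cA$ in place of $A$. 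Hence it suffices to produce a {\sc TensorSketch} distribution meeting $(\RN{1})$ and $(\RN{2})$ with the stated number of rows.

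Next I would isolate exactly which properties of the sketch the proof of Theorem~\ref{thm:size_of_S} consumes. Following the statistical-dimension analysis of \citep{acw16a}, condition $(\RN{1})$ is a constant-accuracy ($1/4$) spectral-approximation (ridge subspace embedding) statement on the range of $U_1$, whose error is measured in the coefficient norm, so that heavily regularized directions may be embedded crudely; condition $(\RN{2})$ is an approximate-matrix-product bound of the form $\| U_1^\top (S^\top S - I) y \|_2 \lesssim m^{-1/2}\, \| U_1\|_F\, \| y\|_2$ applied to $y = b - \cA x^*$. Since $\| b - \cA x^*\|_2^2 \le \OPT$ and $\| U_1\|_F^2 = \sd_\lambda(\cA,L)$ (the trace identity $\| U_1\|_F^2 = \tr(\cA^\top\cA(\cA^\top\cA+\lambda L^\top L)^{-1})$ evaluated in GSVD coordinates), the product bound yields the target $\sqrt{\eps\OPT/2}$ once $m \gtrsim \eps^{-1}\sd_\lambda(\cA,L)$, whereas $(\RN{1})$ at constant accuracy contributes the $\sd_\lambda(\cA,L)^2$ term. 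The hypothesis $\lambda \le \sigma_1^2/\eps$ is a regime condition, inherited from this analysis, under which $\sd_\lambda(\cA,L)$ faithfully measures the effective dimension and the error terms balance to give the stated bound.

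The crux is then to certify that {\sc TensorSketch}, and not merely {\sc CountSketch}, satisfies both $(\RN{1})$ and $(\RN{2})$ with the same asymptotic row count $m \ge K(\eps^{-1}\sd_\lambda(\cA,L) + \sd_\lambda(\cA,L)^2)$. For this I would invoke \citep{anw14}, which establishes that {\sc TensorSketch} is an oblivious subspace embedding and obeys the approximate-matrix-product property, together with the underlying JL-moment (second-moment) bounds. Because the derivation of Theorem~\ref{thm:size_of_S} consumes only these moment estimates, the whole argument transfers, giving $(\RN{1})$ and $(\RN{2})$ with constant probability after possibly enlarging $K$. I expect this to be the main obstacle: the row bound is stated in the refined $\sd_\lambda$ scale rather than the cruder ambient-dimension ($d$) scale, so one must check that the moment bounds of \citep{anw14} are strong enough to localize the embedding to the range of $U_1$ as weighted by the regularizer, i.e., that the ridge-leverage-score refinement survives the replacement of {\sc CountSketch} by {\sc TensorSketch}.

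Finally I would verify the efficiency claim. Computing $S\cA = S(A_1\otimes\cdots\otimes A_q)$ does not require materializing $\cA$: by the FFT-based procedure of Section~\ref{sec:background}, each sketched column costs $O(\sum_j \nnz(A_{j_{*,i_j}}) + qm\log m)$, so $S\cA$ and $Sb$ are formed without ever writing down the $n = \prod_i n_i$ rows of $\cA$, comfortably within the $O(\nnz(\cA))$ budget. Combining this with $(\RN{1})$, $(\RN{2})$, and Lemma~\ref{lem:reg} shows that the output $\widetilde{x}$ of Algorithm~\ref{alg:tensorregression_p} satisfies~\eqref{eq:goal ridge regression}, completing the proof.
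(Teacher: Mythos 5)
Your proposal is correct and follows essentially the same route as the paper: the paper derives the corollary by replacing $A$ with $\cA$ in Theorem~\ref{thm:size_of_S}, whose proof needs only the approximate-matrix-product bound $\normF{W^\top S^\top S H - W^\top H} \le C\normF{W}\normF{H}/\sqrt{m}$ applied with $W=H=U_1$ (Property~(\RN{1}), giving the $\sd_\lambda^2$ term) and with $W=U_1$, $H=b-\cA x^*$ (Property~(\RN{2}), giving the $\eps^{-1}\sd_\lambda$ term), and this bound holds for {\sc TensorSketch} by Lemma~\ref{lem:mp} (equivalently, the moment bounds of \citep{anw14} you cite), with $S\cA$ computed via the FFT-based procedure. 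The only minor over-caution is your flagged "main obstacle" about localizing to the range of $U_1$: no ridge-leverage-score machinery is needed, since the Frobenius-norm form of the product bound together with $\normF{U_1}^2 = \sd_\lambda(\cA,L)$ (Lemma~\ref{lem U1 size}) already yields the refined $\sd_\lambda$ scale automatically.
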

\vspace{-3mm}
\section{TENSOR PRODUCT ABSOLUTE DEVIATION REGRESSION }\label{sec: ell_p}
\vspace{-3mm}
We extend our previous results for the $\ell_2$ norm (i.e., least squares regression) to general $\ell_p$ norms, with a focus on $p = 1$ (i.e., absolute deviation regression). Specifically we consider $\min_x \|\cA x-b\|_p$, where $\cA=A_1\otimes A_2\otimes \cdots \otimes A_q$. 
We will show in this section that with probability at least $2/3$, 
we can quickly find an $\tilde{x}$ for which
$
\|\cA \tilde{x}-b\|_1 \leq (1+\eps)\min_x \|\cA x-b\|_1.
$

As in \cite{cw13}, for each $i$, in $O(\nnz(A_i) \log (d_i)) + O(r_i^3)$ time we can replace the input matrix $A_i\in \R^{n_i \times d_i}$ with a new matrix with the same column space of $A_i$ and full column rank. We therefore assume $\cA$ has full rank in what follows. 

Suppose $S$ is the {\sc TensorSketch} matrix defined in Section \ref{sec:background}.  Let $w_i \in \mathbb{N}$ 
and assume $w_i \mid n_i$. 
Split $A_i$ into $n_i/w_i$ matrices $A_1^{(i)}, \ldots, A_{n_i/w_i}^{(i)}$,
each $w_i \times d_i$, so that $A_j^{(i)}$ is the submatrix of $A_i$ indexed by the $j$-th block of $w_i$ rows. Note $\cA$ can be written as:
\begin{align}
\begin{bmatrix}
	A_1^{(1)} \otimes A_1^{(2)}\otimes \cdots \otimes A_1^{(q-1)}\otimes A_1^{(q)} \\
	A_1^{(1)} \otimes A_1^{(2)}\otimes \cdots \otimes A_1^{(q-1)}\otimes  A_2^{(q)} \\
	\vdots \\
	A_{n_1/w_1}^{(1)} \otimes A_{n_2/w_2}^{(2)}\otimes \cdots \otimes A_{n_{q-1}/w_{q-1}}^{(q-1)}\otimes  A_{n_q/w_q}^{(q)} . \nonumber
\end{bmatrix}.
\end{align}
For each $A_{i_1}^{(1)} \otimes A_{i_2}^{(2)}\otimes \cdots \otimes A_{i_q}^{(q)} $, we can use the {\sc TensorSketch} matrix $S_{i_1i_2\ldots i_q} \in \R^{m \times \prod_{i=1}^q w_i}$, where we set $m \geq 100\prod_{i=1}^q d_i^2 (2+3^q)/\eps^2$, such that with probability at least $.99$, $\|S_{i_1i_2\ldots i_q}  A_{i_1}^{(1)} \otimes A_{i_2}^{(2)}\otimes \cdots \otimes A_{i_q}^{(q)}  x\|_2 = (1\pm\eps)\|A_{i_1}^{(1)} \otimes A_{i_2}^{(2)}\otimes \cdots \otimes A_{i_q}^{(q)} x\|_2$ simultaneously for all $x\in \R^{d}$ as $S_{i_1i_2\ldots i_q}$ is an oblivious subspace embedding (Lemma.~\ref{lem:ose}) . Now we use Algorithm 4 from \citep{liang2014improved} to boost the success probability by computing $t = O(\log(1/\delta))$ independent {\sc TensorSketch} products $S^{(j)}_{i_1i_2\ldots i_q}  A_{i_1}^{(1)} \otimes A_{i_2}^{(2)}\otimes \cdots \otimes A_{i_q}^{(q)}$, $j = [t]$, each with only
constant success probability, and then running a cross validation procedure similar to that in Algorithm 4 of \cite{liang2014improved}, to find one which succeeds with probability   at least $1-\delta  $:


\begin{lemma}[\citep{liang2014improved}]
\label{lem:boost}
	For $\delta, \epsilon \in (0,1)$, let $t=O(\log 1/\delta)$ and $m \geq 100\prod_{i=1}^q d_i^2(2+3^q)/\epsilon^2$. Running algorithm Algorithm 4 in \citep{liang2014improved} with parameters $t,m$,
	we can obtain a {\sc tensorsketch} $S \in \mathbb{R}^{m \times \prod_{i=1}^q w_i } $ such that with probability at least $1-\delta$, $\|S A_{i_1}^{(1)} \otimes A_{i_2}^{(2)}\otimes \cdots \otimes A_{i_q}^{(q)}  x\|_2 = (1\pm\eps)\|A_{i_1}^{(1)} \otimes A_{i_2}^{(2)}\otimes \cdots \otimes A_{i_q}^{(q)} x\|_2$ for all $x \in \mathbb{R}^d$.
	
\end{lemma}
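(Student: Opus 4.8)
The plan is to read this as a probability-amplification (boosting) statement: each individual {\sc TensorSketch} $S^{(j)}$ is already a $(1\pm\eps)$-subspace embedding for the column space of the block matrix $M := A_{i_1}^{(1)}\otimes\cdots\otimes A_{i_q}^{(q)}$ with only constant success probability, and the goal is to upgrade that single constant-probability guarantee to a $(1-\delta)$ guarantee by drawing $t=O(\log(1/\delta))$ independent copies and selecting one that survives a cross-validation test. First I would invoke the oblivious subspace embedding guarantee (Lemma~\ref{lem:ose}) with the stated target dimension $m\ge 100\prod_{i=1}^q d_i^2(2+3^q)/\eps^2$: for each $j\in[t]$, with probability at least $0.99$ the map $S^{(j)}$ satisfies $\|S^{(j)}Mx\|_2=(1\pm\eps)\|Mx\|_2$ simultaneously for all $x\in\R^d$. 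Call such an $S^{(j)}$ \emph{good}. Since the $S^{(j)}$ are drawn independently, a Chernoff bound shows that for $t=O(\log(1/\delta))$ the number of good sketches strictly exceeds $t/2$ (in fact is at least a $0.9$ fraction) with probability at least $1-\delta$.

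The central difficulty is that we never form $M$ and have no reference norm, so we cannot test a single $S^{(j)}$ in isolation; the selection must be made from the sketched matrices $S^{(j)}M$, which we have already computed. This is exactly what Algorithm~4 of \citep{liang2014improved} accomplishes via pairwise cross validation. I would declare two sketches $S^{(j)},S^{(k)}$ \emph{consistent} if their induced quadratic forms on the column space of $M$ agree up to a factor $(1\pm c\eps)$ for a fixed constant $c$, a condition checkable from $S^{(j)}M$ and $S^{(k)}M$ alone, e.g. by verifying that all generalized eigenvalues of the pencil $\big((S^{(j)}M)^\top(S^{(j)}M),\,(S^{(k)}M)^\top(S^{(k)}M)\big)$ lie in $[(1-\eps)^2/(1+\eps)^2,\,(1+\eps)^2/(1-\eps)^2]$. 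By the triangle-inequality-type composition of distortions, any two good sketches are automatically consistent, so the good sketches form a clique of size exceeding $t/2$ in the consistency graph.

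Finally I would verify correctness of the selection rule, which outputs any sketch consistent with more than $t/2$ of the others. Completeness is immediate: a good sketch is consistent with all of the (more than $t/2$) other good sketches, so every good sketch passes. For soundness, any sketch $S^{(j)}$ that passes is consistent with more than $t/2$ sketches, and since the good sketches also number more than $t/2$, the two sets intersect, so $S^{(j)}$ is consistent with at least one good $S^{(k^*)}$. Composing $\|S^{(j)}Mx\|_2=(1\pm c\eps)\|S^{(k^*)}Mx\|_2$ with $\|S^{(k^*)}Mx\|_2=(1\pm\eps)\|Mx\|_2$ gives $\|S^{(j)}Mx\|_2=(1\pm O(\eps))\|Mx\|_2$ for all $x$, so the selected sketch is itself a subspace embedding. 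Running each sub-sketch with the error parameter rescaled by the resulting constant (absorbed into the constant $100$ in the bound on $m$) drives the composed distortion down to $(1\pm\eps)$, yielding the claimed guarantee with probability at least $1-\delta$.

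I expect the main obstacle to be the soundness half of the cross-validation analysis together with the constant bookkeeping. One must design the consistency test to be computable purely from the sketched data $\{S^{(j)}M\}$, ensure that good pairs always pass, and — the delicate part — rule out a bad sketch masquerading by being consistent with a majority, which is precisely what forces its consistent set to overlap the good clique and thereby inherit the embedding property. Tracking how the $(1\pm c\eps)$ test tolerance and the per-sketch $(1\pm\eps)$ distortion compose, and rescaling $\eps$ to compensate, is the routine-but-careful step that fixes the leading constant in the bound on $m$.
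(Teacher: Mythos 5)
Your proposal is correct and takes essentially the same route as the paper: the paper in fact gives no proof of this lemma at all, deferring entirely to Algorithm 4 of \citep{liang2014improved}, whose underlying scheme is exactly what you reconstruct --- per-sketch constant success probability via the OSE guarantee (Lemma~\ref{lem:ose}), $t=O(\log(1/\delta))$ independent copies with a Chernoff bound giving a majority of good sketches, and a pairwise cross-validation/consistency selection whose soundness follows from the good-clique/majority-overlap argument. Your filled-in consistency test and the $(1\pm c\eps)$ composition-and-rescaling bookkeeping are the standard analysis behind that citation, and the only slack you leave (absorbing the rescaled $\eps$ into the leading constant of $m$) is routine.
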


After computing a {\sc tensorsketch} $S_{i_1i_2\ldots i_q} \in \mathbb{R}^{m \times \prod_{i=1}^q w_i}$ using lemma~\ref{lem:boost} for each row-block $A_{i_1}^{(1)}\otimes ...\otimes A_{i_q}^{(q)}$, we compose a single {\sc tensorsketch} for $\mathcal{A}$ as $S = {\rm{diag}}(S_{1,\cdots,1},\cdots,S_{i_1,\cdots,i_q},\cdots,S_{(n_1/w_1),\cdots,(n_q/w_q)} ) \in \mathbb{R}^{m \prod_{i=1}^q n_i/w_i \times \prod_{i=1}^q {n_i} }$, which is defined as:
\begin{align}
 \begin{bmatrix} 
     S_{1,\cdots,1} & &  &\\ 
    & \ddots & &  &\\ 
     & & S_{i_1,\cdots, i_q} & & \\ 
    & & &  \ddots     &\\
    & & & & S_{(n_1/w_1),\cdots,(n_q/w_q)} 
\end{bmatrix},\nonumber
\end{align} where each block on the diagonal is from Lemma~\ref{lem:boost}.
Note that $\cA$ has in total $\prod_{i=1}^q (n_i/w_i)$ many blocks. Using Lemma~\ref{lem:boost}  with a union bound over all  blocks of $\cA$,  we have the following theorem which shows $S$ is an oblivious subspace embedding for $\cA$ in $\ell_2$ norm:
\begin{theorem}[$\ell_2$ OSE for tensor matrices]\label{thm:jlmain}
 Given $\delta, \epsilon \in (0,1)$, let $S\in \mathbb{R}^{m \prod_{i=1}^q n_i/w_i \times \prod_{i=1}^q n_i }$ denote the matrix that has $\prod_{i=1}^q n_i /w_i$ diagonal block matrices where each diagonal block $S_{i_1i_2\ldots i_q} \in \mathbb{R}^{m \times \prod_{i=1}^q w_i }$ is from Lemma~\ref{lem:boost}. 
With probability at least $1-\prod_{i=1}^q (n_i/w_i) \delta$, $\|S\cA x\|_2 = (1 \pm \eps)\|\cA x\|_2,\forall x \in \mathbb{R}^d$.
\end{theorem}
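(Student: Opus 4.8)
The plan is to exploit the block-diagonal structure of $S$ together with the fact that the row-blocks of $\cA$ form a genuine partition of its rows, so that both $\|S\cA x\|_2^2$ and $\|\cA x\|_2^2$ split as sums over the $\prod_{i=1}^q(n_i/w_i)$ blocks; I would then control each summand by the per-block guarantee of Lemma~\ref{lem:boost} and combine with a union bound.

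First I would record the key structural identity. By the mixed-product property of Kronecker products, restricting the rows of $\cA = A_1\otimes\cdots\otimes A_q$ to the block indexed by $(i_1,\ldots,i_q)$ (those rows whose $k$-th coordinate lies in the $i_k$-th group of $w_k$ rows) yields exactly $A_{i_1}^{(1)}\otimes\cdots\otimes A_{i_q}^{(q)}$. These blocks partition the $n=\prod_i n_i$ rows of $\cA$ into $\prod_i (n_i/w_i)$ groups of size $\prod_i w_i$, so that $\|\cA x\|_2^2 = \sum_{(i_1,\ldots,i_q)} \|(A_{i_1}^{(1)}\otimes\cdots\otimes A_{i_q}^{(q)})x\|_2^2$ for every $x\in\R^d$. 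Since $S$ is block diagonal with the $(i_1,\ldots,i_q)$-block equal to $S_{i_1\ldots i_q}$ acting on the corresponding coordinate block of $\cA x$, the same partition gives $\|S\cA x\|_2^2 = \sum_{(i_1,\ldots,i_q)} \|S_{i_1\ldots i_q}(A_{i_1}^{(1)}\otimes\cdots\otimes A_{i_q}^{(q)})x\|_2^2$.

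Next I would invoke Lemma~\ref{lem:boost} once per block. For a fixed block it guarantees that with probability at least $1-\delta$ an event $E_{i_1\ldots i_q}$ holds, on which $\|S_{i_1\ldots i_q}(A_{i_1}^{(1)}\otimes\cdots)x\|_2 = (1\pm\eps)\|(A_{i_1}^{(1)}\otimes\cdots)x\|_2$ simultaneously for all $x$. Taking a union bound over the $\prod_i(n_i/w_i)$ blocks, the intersection $\bigcap E_{i_1\ldots i_q}$ holds with probability at least $1-\prod_i(n_i/w_i)\delta$. On that event, squaring each per-block bound and summing over the partition yields $(1-\eps)^2\|\cA x\|_2^2 \le \|S\cA x\|_2^2 \le (1+\eps)^2\|\cA x\|_2^2$ uniformly in $x$; taking square roots gives $\|S\cA x\|_2 = (1\pm\eps)\|\cA x\|_2$, exactly the claim (note that summing does \emph{not} degrade the distortion, since the same factor $(1\pm\eps)$ multiplies every term).

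The only real obstacle is ensuring that the union bound preserves the ``for all $x$'' quantifier. This works precisely because each $E_{i_1\ldots i_q}$ is an oblivious \emph{subspace}-embedding event that already holds for all $x$ at once, rather than a fixed-vector Johnson--Lindenstrauss guarantee; intersecting finitely many such events therefore still produces a single event on which every block embeds the entire column space, so the summed bound is uniform in $x$. Everything else reduces to the exact additivity of the squared Euclidean norm across the coordinate partition and requires no further approximation.
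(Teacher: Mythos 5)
Your proof is correct and follows exactly the route the paper intends: the paper justifies Theorem~\ref{thm:jlmain} in one line (``Using Lemma~\ref{lem:boost} with a union bound over all blocks of $\cA$''), and your argument fills in precisely those details---the block decomposition of $\cA$ via the mixed-product property, the per-block subspace-embedding guarantee, the union bound over $\prod_{i=1}^q(n_i/w_i)$ blocks, and additivity of squared $\ell_2$ norms across the row partition. No gaps; your remark that the per-block events are uniform in $x$ (so the intersection remains a subspace-embedding event) is exactly the point that makes the union bound legitimate.
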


It is known that for any matrix $A\in\mathbb{R}^{n\times r}$, we can compute a change of basis $U\in\mathbb{R}^{r\times r}$ such that $AU$ is an $(\alpha,\beta,p)$ well-conditioned basis of $A$ (see Definition~\ref{def:wcb}), in time polynomial with respect to $n,r$ \citep{ddhkm09}. Specifically, Theorem 5 in \citep{ddhkm09} shows that we can compute a change of basis $U$ for which $AU$ is a well-conditioned basis of $A$ in time $O(nr^5\log (n))$ time.
However we cannot afford to directly use the results from \citep{ddhkm09} to compute a well-conditioned basis for $\mathcal{A}$, which requires time at least $\Omega(n)$. Instead we compute a well-conditioned basis for $\mathcal{A}$ through a sketch $S\cA$ where $S$ is the {\sc tensorsketch} from Theorem~\ref{thm:jlmain}. 

Specifically we define the following procedure {\sf Condition}($\cA$) for computing a well-conditioned basis for $\cA$.  
Given  $\cA=A_1\otimes \cdots \otimes A_q  \in \mathbb{R}^{n\times r}$, 1) Compute $S \cA$; 2) Compute a $d \times d$ change of basis
matrix $U$ so that $S \cA U$ is an $(\alpha, \beta, p)$-well-conditioned basis of the
column space of  $S \cA$; 3) Output $\cA U/(d\gamma_p)$,
where $\gamma_p \equiv \sqrt{2} t^{1/p-1/2}$ for $p\le 2$,
and $\gamma_p \equiv \sqrt{2} w^{1/2-1/p}$ for $p\ge 2$, where $t= \prod_{i=1}^q m_i$ and $w=\prod_{i=1}^q w_i$. The following Lemma~\ref{lem:lpl2} (the proof can be found in the Appendix) is the analogue of that in \cite{CDMMMW} proved for the Fast Johnson Lindenstauss
Transform. 

\define{lem:lpl2}{Lemma}{
  For any $p\geq 1$. {\sf Condition}$(\cA)$ computes $\cA U/(d\gamma_p)$ which is an $(\alpha, \beta \sqrt{3} d (tw)^{|1/p-1/2|} , p)$-well-conditioned basis of $\cA$, with probability at least $1-\prod_{i=1}^q (n_i/w_i) \delta$.
  }
  \state{lem:lpl2}
Lemma~\ref{lem:lpl2} indicates that {\sf Condition$(\cA)$} computes a $(\alpha, \beta \cdot \poly(\max(d, \log n)) , p)$-well-conditioned basis. A well-conditioned basis can be used 
to solve $\ell_p$ regression problems, via sampling a subset of rows
of the well-conditioned basis $\mathcal{A}U$ with probabilities proportional to the $p$-th power of the $\ell_p$ norm of the rows \citep{w14}.  However the first issue for sampling is that  we cannot afford to compute $\mathcal{A}U$ as this requires $O(\nnz(\mathcal{A})d)$ time. To fix this, we apply a Gaussian sketch matrix $G \in \mathbb{R}^{d\times \log(n)}$ with i.i.d normal random variables \citep{dmmw11} on the right hand side of $\mathcal{A}U$. Note that $\mathcal{A}UG$ can be computed efficiently by first compuing $UG$ and then $\mathcal{A}(UG)$ in time $O(d^2\log (n) + nnz(\mathcal{A})\log(n))$.  The second issue is that even with $\mathcal{A}UG$, computing the $\ell_p$ norm of each row of $\mathcal{A}UG$ takes $O(n)$ time, but we want sublinear time.  This leads us to the following sampling technique. 

The high level idea is that since $\mathcal{A}UG$ only has $O(\log(n))$ columns, we can afford to sample columns of $\mathcal{A}UG$ with probability proportional to the $p$-th power of the $\ell_p$ norms of the columns, if we can efficiently estimate the $\ell_p$ norms of the columns (note that na\"ively computing the $\ell_p$ norm of a column also takes $O(n)$ time). Let us denote the first of column of $UG$ as $e\in \mathbb{R}^{\log(n)}$. We focus on how to efficiently estimate the $\ell_p$ norm of the first column of $\mathcal{A}UG$, which is $\mathcal{A}e$, and all the left columns can be estimated in the same way. We first reshape the vector $\mathcal{A}e$ into a 2-d matrix 
\begin{align}\label{eq:reshape_2d}
M=A_1\otimes ...\otimes A_{q_1} E (A_{q_1+1}\otimes...\otimes A_{q})^\top, \vspace{-2mm}
\end{align} 
where $M\in \mathbb{R}^{(n_1n_2...n_{q_1})\times (n_{q_1+1}...n_q})$, $E$ is obtained from reshaping $e$ into a $(d_1d_2...d_{q_1})\times (d_{q_1+1}..d_q)$ matrix and $q_1\in [1,q]$ is chosen such that  $(n_1n_2...n_{q_1}) \approx (n_{q_1+1}...n_q)$. Namely we reshape the column $\mathcal{A}e$ into a (nearly) square matrix. Focusing now on $p = 1$, note that $\|\mathcal{A}e\|_1 = \sum_{i=1}^{n_{q_1+1}...n_q} \|M_{*,i}\|_1$.  Hence to estimate $\|\mathcal{A}e\|_1$, we only need to estimate the $\ell_1$ norm of the columns of $M$.

Let us apply a sketch matrix ${R}\in \mathbb{R}^{O(\log(n))\times\prod_{i=1}^{q_1}n_i}$, whose entries are sampled i.i.d. from the Cauchy distribution, to the left hand side of $M$. For the $i$'th column of $M$, let us define random variables $z_l^{i} = R_{l,*}^\top M_i$, for $l\in [O(\log(n))]$, where $R_{l,*}$ is the $l$'th row of $R$. Due to the 1-stability property of the Cauchy distribution, we have that $\{z_{l}^i\}_{l=1}^{O(\log n)}$ are $O(\log n)$ independent Cauchys scaled by $\|M_i \|_1$. Applying a Chernoff bound to independent half-Cauchys (see  Claims 1, 2 and Lemmas 1, 2 in \cite{Indyk2006}), we have
$0.5\|M_i\|_1 \leq \texttt{median}_{l \in [O(\log n)]}\{ |z_j^{i}| \}\leq 1.5\|M_i\|_1$ with probability at least $1-2 e^{-cO(\log(n))}$, with constant $c\geq 0.07$.  Denote the median of $\{|z_l^i|\}_{l=1}^{O(\log (n))}$ as $\lambda_i$. By a union bound over all columns of $M$, we have that with probability at least $1- n_{[q] \backslash [q_1]}  2e^{-cO(\log(n))}$:
\begin{align}
\label{eq:lambda_i}
& \lambda_i =  (1\pm 0.5)\|M_i\|_1, \forall i \in [ n_{[q] \backslash [q_1]} ], \\
\label{eq:lambda}
&  \lambda_e = \sum_{i=1}^{n_{[q] \backslash [q_1]} }\lambda_i = (1 \pm 0.5)\|\mathcal{A}e\|_1
\end{align}
where $n_{[q] \backslash [q_1]} = \prod_{i=q_1+1}^q n_i$.

\begin{algorithm}[tb]
 \caption{$\ell_{1}$ tensor product regression}
  \label{alg:l1tensorregression}
  \begin{algorithmic}[1]
  \Procedure{\textsc{L1TRregression}}{$A,b,\epsilon,\delta$}
    \State Construct a {\sc tensorsketch} $S \in \mathbb{R}^{(m\prod_{i=1}^q \frac{n_i}{w_i}) \times n}$. 
    
    \State Run {\sf Condition}($\cA$) using $S$ to compute $U/(d\gamma_p)$.
    
    \State Generate a Gaussian  matrix $G\in \mathbb{R}^{d\times O(\log(n))}$ and a Cauchy sketch matrix $R\in\mathbb{R}^{\log(n)\times n}$.
    \For{for each column $e$ in $UG$}
    	\State Reshape $\cA e$ to $M$ (Eq.~\ref{eq:reshape_2d}).
	 \State Compute $\lambda_i$ and $\lambda_e =\sum_i\lambda_i$ (Eq~\ref{eq:lambda_i} and \ref{eq:lambda}).
    \EndFor
    \For {$i \in [\sqrt{\prod_{i=1}^q w_i}\poly(d)$]}
    	\State Sample a column $(AUG)_{*,e}$ with probability proportional to $ \lambda_e$. \label{line:sample_AUG_c}.   \Comment{$ e  \in {[O(\log(n))]}$} \label{line:sample_start}
	\State Reshape $(AUG)_{*,e}$ to $M$, sample a column $M_{*,j}$ with probability proportional to $\lambda_j$.
	\State Sample an entry $M_{k,j}$ with probability proportional to $|M_{k,j}|$.
	\State Convert $(k, j)$ back to the corresponding row index in $\cA$, denoted as $r_i$. \label{line:sample_end}
    \EndFor




    \State  
 $
	\widetilde{x} \leftarrow \min_{x}\|D(A_1\otimes A_2\otimes \cdots \otimes A_q)x-Db\|_1,
$
where $D$ is a diagonal matrix to select the $r_1, r_2, ..., r_N$-th rows of $\cA $ and $b$ with $N = \sqrt{\prod_{i=1}^q w_i\poly(d)}$.
\State \Return $\widetilde{x}$ \Comment{$ \widetilde { x}  \in \R^{d_1d_2\cdots d_q}$}
\EndProcedure
    \end{algorithmic}
\end{algorithm}

Note that since $\mathcal{A}UG$ only has $O(\log n)$ columns, we can afford to compute the $\ell_1$ norm of all the columns using the above procedure. Let us denote the $\ell_1$ norms of the columns of $AUG$ by $\lambda_1, \lambda_2, ..., \lambda_{O(\log n)}$. We can sample a column $(\mathcal{A}UG)_{*,i}$ with probability proportional to $\lambda_i$  (Line~\ref{line:sample_AUG_c} in Alg.~\ref{alg:l1tensorregression}). Once we sample a column $\mathcal{A}UG_{*,i}$, we need to sample an entry $j\in [n]$ with probability proportional to the absolute value of the entry $|(\mathcal{A}UG)_{j,i}|$. As we cannot afford to compute $|(\mathcal{A}UG)_{j,i}|$ for all $j\in [n]$, we use the reshaped 2-d matrix $M$ of $(\cA UG)_{*,i}$. Note that sampling an entry in $M$ with probability proportional to the absolute values of entries of $M$ is equivalent to sampling an entry  $j\in [n]$ from $(\cA UG)_{*,i}$ with probability proportional to the absolute value of the entries of $(\cA UG)_{*,i}$.  We first sample a column $M_{*,j}$ from all the columns of $M$ with probability proportional to $\lambda_j$ for $j \in [\prod_{i=q_1+1}^q n_i]$. We then sample an entry $M_{k,j}$ with probability proportional to $|M_{k,j}|$ for $k\in [\prod_{i=1}^{q_1} n_i]$. Noting that $k\in [\prod_{i=1}^{q_1} n_i]$ and $j\in [\prod_{i=q_1+1}^q n_i]$, the pair $(k,j)$ uniquely determines a corresponding row index $r$ in $\cA$, for some $r\in [\prod_{i=1}^q n_i]$. Hence we successfully sample a row from $\cA UG$ without ever  computing the $\ell_p$ norm of the rows. The above sampling procedure is summarized in Line~\ref{line:sample_start} to Line~\ref{line:sample_end} in Alg.~\ref{alg:l1tensorregression}. We use the above procedure to sample $\sqrt{\prod_{i=1}^q w_i}\poly(d)$ rows of $\mathcal{A}$. Let $D$ be a diagonal matrix that selects the corresponding sampled rows from $\mathcal{A}$. We can now solve a smaller ADL problem as $\min_{x}\|D\mathcal{A} - Db\|_1$.  Note that our analysis focuses on the $\ell_1$ norm. We can extend the analysis to general $\ell_p$ norms by using a sketching matrix $R\in\mathbb{R}^{O(\log n)\times \prod_{i=1}^{q_1}n_i}$ with entries sampled i.i.d. from a $p$-stable distribution for $p\in [1,2]$.  

We now present our main theorem and defer the proofs to the appendix:
\define{thm:lp_running}{Theorem}{ \rm{(Main result)}
Given $\epsilon\in(0,1)$, 
$\cA\in\R^{n \times d}$ and $b\in\R^{n}$, Alg.~\ref{alg:l1tensorregression} computes $\widehat{x}$ such that 
with probability at least $1/2$, 
$\|\cA \hat{x}-b\|_1\le (1+\epsilon) \min_{x\in \R^d}\|\cA x-b\|_1$.
For the special case when $q = 2$, $n_1=n_2$, the algorithm's running time is $O({n_1}^{3/2} \poly(\prod_{i=1}^2 d_i /\epsilon) )$. 
}
\state{thm:lp_running}
For the special case where $q = 2$ and $n_1=n_2 $, we can see from theorem~\ref{thm:lp_running} our algorithm computes $\widehat{x}$ in time $O({n_1}^{3/2}\poly(\hat{d}))$, which is faster than $O({n_1}^2)$---the time needed for  forming $A_1\otimes A_2$.  
Note that we can run Alg.~\ref{alg:l1tensorregression} $O(\log(1/\delta))$ times independently and pick the best solution among these independent runs to boost the success probability to be $1-\delta$, for $\delta\in[0,1)$.

\vspace{-3mm}
\section{NUMERICAL EXPERIMENTS}
\vspace{-2mm}
We generate matrices $A_1$, $A_2$ and $b$ with all entries sampled i.i.d from a normal distribution. The baseline we compared to is directly solving regression without sketching.  We let ${\tt T_1}$ be the time for directly solving the regression problem, and ${\tt T_2}$ be the time of our algorithm. The time ratio is $r_t={\tt T_2}/ {\tt T_1}$. The relative residual percentage is defined by $
r_e=\frac{100\left|\,  \left\|(A_1 \otimes A_2)\widetilde { x}-b\right\|_2-\left\|(A_1 \otimes A_2)x^*-b\right\|_p \, \right|}{\left\|(A_1 \otimes A_2)x^*-b\right\|_p},
$
where $\widetilde { x}$ is the output of our algorithms and $x^*$ is the optimal solution. Throughout the simulations, we use a moderate input matrix size in order  to accommodate the brute force algorithm and to compare to the exact solution.
\begin{example} [$\ell_2$ Regression]\label{ex:1}
We create a design matrix with moderate size by fixing $n_1=n_2=300$ and $d_1=d_2=15$. Thus $\cA \in \R^{90000\times 225}$.   
We do 10 rounds to compute the mean values of $r_e$ and $r_t$, which is reported in Table \ref{ta:1} (Left).

\vspace{-5pt}
\end{example}

        From Table \ref{ta:1} (left), we can see that when the number $m$ of sampled rows in Algorithm  \ref{alg:tensorregression}
        increases from 8000 to 12000, the mean values of $r_e$ decrease while the mean values of $r_t$ increase. In general we can see that we can achieve around $1\%$ relative error while being 5 times faster than the direct method.

\begin{table}
\caption{Examples \ref{ex:1} and \ref{ex:2}: the values of $r_e$ and $r_t$ with respect to different sampling parameters $m$.}
\label{ta:1}
\begin{center}
\resizebox{1.0 \columnwidth}{!}{
	\begin{tabular}{|c|c|c|c||c|c|c|c|}
\hline
	 & $m$ & $r_e$ &  $r_t$ & & $m$ & $r_e$ & $r_t$\\
	\hline 
 \multirow{3}{1em}{$\ell_2$} & 8000 & 1.79\% & 0.11 & \multirow{3}{1em}{$\ell_1$} & 8000  & 1.89\%  & 0.06\\
  & 12000 & 1.24\% & 0.18 & & 12000  & 1.33\%  & 0.11 \\
  & 16000 & 1.01\% & 0.25 & & 16000  & 0.992\% & 0.18\\
	\hline
\end{tabular}
}
\end{center}
\vspace{-10pt}
\end{table}

%

\begin{example} [$\ell_1$ Regression]
\label{ex:2}
We set $n_1 = n_2 = 300$, $d_1 = d_2 = 15$. For  $\min_x \|Ax-b\|_1$, we solve it by a Linear Programming solver in Gurobi \citep{gurobi}.  We tested different numbers of sampled rows $m$ (the number of rows in $D$ in Alg.~\ref{alg:l1tensorregression}).  The results are summarized in Table~\ref{ta:1} (Right).
\vspace{-5pt}
\end{example} 
As we can see from  Table~\ref{ta:1} (right), our method is around 10 times faster than directly solving the problem, with relative error only around $1\%$.

\begin{example} [P-Spline Regression]
\label{ex:3}
For P-spline regression, $L_3$ is fixed. We use 30 knots and cubic B-splines. The data $u=(u_i)\in \R^{n_1}, v=(v_j)\in \R^{n_2}$ and $b\in \R^{n_1 n_2}$ are generated i.i.d from the normal distribution. We compute the B-spline basis matrices $A_1 \in \R^{n_1 \times d_1}$ and $A_2 \in \R^{n_2 \times d_2}$ separately,  if there are $d_1$ and $d_2$ basis functions for the B-spline. We set $n_1=n_2$, $d_1=d_2$, with $n_1n_2 = 10^4$ and $d_1d_2 = 529$. The original and sketched P-spline regression problem are both solved by {\sc Regularization Tools} \citep{hansen1994} via computing their GSVDs. We test different choices of  $\lambda$. 
The results are shown  in Table \ref{ta:2}. 
\vspace{-5pt}
\end{example}

From Table \ref{ta:2}, sampling only 20\% of the rows can give around $0.05\%$ relative error, while the computation time is half of the time of directly solving the p-spline. 

\begin{table}
\caption{Example \ref{ex:3}: the mean values of $r_e$ and $r_t$ with respect to different sampling parameters $m$ and regularization parameters $\lambda$. }
\label{ta:2}
\begin{center}
\resizebox{0.6\columnwidth}{!}{
	\begin{tabular}{|c|c|c|c|c|}
\hline
$\lambda $&	$m$ & $r_e$ &  $r_t$\\
	\hline 
 1& 2000 & 4.43e-2\% & 0.52\\
& 4000 & 2.99e-2\% & 0.70 \\
& 6000 & 7.92e-2\% & 0.91\\
\hline 
0.1& 2000 & 6.98e-2\% & 0.47 \\
& 4000 & 4.07e-2\% & 0.72 \\
& 6000 & 5.41e-2\% & 0.94\\
\hline 
0.01& 2000 & 3.78e-2\% & 0.46 \\
& 4000 & 1.07e-1\% & 0.71 \\
& 6000 & 2.97e-2\% & 0.95\\
	\hline
\end{tabular}
}
\end{center}
\vspace{-10pt}
\end{table}

\vspace{-2mm}
\section{CONCLUSION}
\vspace{-2mm}
We propose algorithms for efficiently solving tensor product least squares regression, regularized P-splines, as well as tensor product least absolute deviation regression, using sketching techniques. Our main contributions are: (1) we apply {\sc tensorsketch} to least square regression problems, (2) we propose new statistical dimension measures for P-splines, extending the previous statistical dimension defined only for classic Ridge regression, and (3) we extend {\sc tensorsketch} to $\ell_p$ norms and propose an algorithm that can solve tensor product $\ell_1$ regression in time sublinear in the time for explicitly computing the tensor product. Simulation results support our theorems and demonstrate that our algorithms are much faster than brute-force algorithms and can achieve approximate solutions that are close to optimal.

\section*{ACKNOWLEDGEMENT}
Huaian Diao is supported in part by the Fundamental Research Funds for the Central Universities under the grant 2412017FZ007. Wen Sun is supported in part by Office of Naval Research contract N000141512365.

\bibliographystyle{abbrvnat}
\bibliography{ref}

\begin{thebibliography}{35}
\providecommand{\natexlab}[1]{#1}
\providecommand{\url}[1]{\texttt{#1}}
\expandafter\ifx\csname urlstyle\endcsname\relax
  \providecommand{\doi}[1]{doi: #1}\else
  \providecommand{\doi}{doi: \begingroup \urlstyle{rm}\Url}\fi

\bibitem[Avron et~al.(2014)Avron, Nguyen, and Woodruff]{anw14}
H.~Avron, H.~Nguyen, and D.~Woodruff.
\newblock Subspace embeddings for the polynomial kernel.
\newblock In \emph{Advances in Neural Information Processing Systems(NIPS)},
  pages 2258--2266, 2014.

\bibitem[Avron et~al.(2016)Avron, Clarkson, and Woodruff]{acw16a}
H.~Avron, K.~L. Clarkson, and D.~P. Woodruff.
\newblock Sharper bounds for regression and low-rank approximation with
  regularization.
\newblock \emph{CoRR}, abs/1611.03225, 2016.

\bibitem[Carter and Wegman(1979)]{cw79}
L.~Carter and M.~N. Wegman.
\newblock Universal classes of hash functions.
\newblock \emph{J. Comput. Syst. Sci.}, 18\penalty0 (2):\penalty0 143--154,
  1979.

\bibitem[Charikar et~al.(2004)Charikar, Chen, and Farach-Colton]{ccf04}
M.~Charikar, K.~Chen, and M.~Farach-Colton.
\newblock Finding frequent items in data streams.
\newblock \emph{Theor. Comput. Sci.}, 312\penalty0 (1):\penalty0 3--15, 2004.

\bibitem[Chen and Plemmons(2010)]{chen2012}
D.~Chen and R.~J. Plemmons.
\newblock Nonnegativity constraints in numerical analysis.
\newblock In \emph{The birth of numerical analysis}, pages 109--139. World Sci.
  Publ., Hackensack, NJ, 2010.

\bibitem[Clarkson et~al.(2013)Clarkson, Drineas, Magdon-Ismail, Mahoney, Meng,
  and Woodruff]{CDMMMW}
K.~Clarkson, P.~Drineas, M.~Magdon-Ismail, M.~Mahoney, X.~Meng, and D.~P.
  Woodruff.
\newblock The fast {C}auchy transform and faster robust linear regression.
\newblock In \emph{SODA}, 2013.

\bibitem[Clarkson(2005)]{c05}
K.~L. Clarkson.
\newblock Subgradient and sampling algorithms for $\ell_1$ regression.
\newblock In \emph{Proceedings of the sixteenth annual ACM-SIAM symposium on
  Discrete algorithms (SODA)}, pages 257--266, 2005.

\bibitem[Clarkson and Woodruff(2013)]{cw13}
K.~L. Clarkson and D.~P. Woodruff.
\newblock Low rank approximation and regression in input sparsity time.
\newblock In \emph{Symposium on Theory of Computing Conference, STOC'13, Palo
  Alto, CA, USA, June 1-4, 2013}, pages 81--90.
  \url{https://arxiv.org/pdf/1207.6365}, 2013.

\bibitem[Dasgupta et~al.(2009)Dasgupta, Drineas, Harb, Kumar, and
  Mahoney]{ddhkm09}
A.~Dasgupta, P.~Drineas, B.~Harb, R.~Kumar, and M.~W. Mahoney.
\newblock Sampling algorithms and coresets for $\ell_p$ regression.
\newblock \emph{SIAM J. Comput.}, 38\penalty0 (5):\penalty0 2060--2078, 2009.

\bibitem[Drineas et~al.(2011)Drineas, Magdon-Ismail, Mahoney, and
  Woodruff]{dmmw11}
P.~Drineas, M.~Magdon-Ismail, M.~W. Mahoney, and D.~P. Woodruff.
\newblock Fast approximation of matrix coherence and statistical leverage.
\newblock \emph{CoRR}, abs/1109.3843, 2011.

\bibitem[Eilers and Marx(2006)]{eilers2006multidimensional}
P.~H. Eilers and B.~D. Marx.
\newblock Multidimensional density smoothing with p-splines.
\newblock In \emph{Proceedings of the 21st international workshop on
  statistical modelling}, 2006.

\bibitem[Eilers and Marx(1996)]{eilers96}
P.~H.~C. Eilers and B.~D. Marx.
\newblock Flexible smoothing with {$B$}-splines and penalties.
\newblock \emph{Statist. Sci.}, 11\penalty0 (2):\penalty0 89--121, 1996.

\bibitem[Eilers et~al.(2015)Eilers, Marx, and Durb\'an]{eilers15}
P.~H.~C. Eilers, B.~D. Marx, and M.~Durb\'an.
\newblock Twenty years of {P}-splines.
\newblock \emph{SORT}, 39\penalty0 (2):\penalty0 149--186, 2015.
\newblock ISSN 1696-2281.

\bibitem[Fausett and Fulton(1994)]{Fausett1994}
D.~W. Fausett and C.~T. Fulton.
\newblock Large least squares problems involving {K}ronecker products.
\newblock \emph{SIAM J. Matrix Anal. Appl.}, 15\penalty0 (1):\penalty0
  219--227, 1994.

\bibitem[Golub and Van~Loan(2013)]{golubVanLoan2013Book}
G.~H. Golub and C.~F. Van~Loan.
\newblock \emph{Matrix computations}.
\newblock Johns Hopkins Studies in the Mathematical Sciences. Johns Hopkins
  University Press, Baltimore, MD, 2013.

\bibitem[Gurobi~Optimization(2016)]{gurobi}
I.~Gurobi~Optimization.
\newblock Gurobi optimizer reference manual, 2016.
\newblock URL \url{http://www.gurobi.com}.

\bibitem[Hansen(1994)]{hansen1994}
P.~C. Hansen.
\newblock Regularization tools: A matlab package for analysis and solution of
  discrete ill-posed problems.
\newblock \emph{Numerical Algorithms}, 6\penalty0 (1):\penalty0 1--35, 1994.

\bibitem[Indyk(2006)]{Indyk2006}
P.~Indyk.
\newblock Stable distributions, pseudorandom generators, embeddings, and data
  stream computation.
\newblock \emph{J. ACM}, 53\penalty0 (3):\penalty0 307--323, 2006.

\bibitem[Kannan and Vempala(2009)]{kv09}
R.~Kannan and S.~Vempala.
\newblock Spectral algorithms.
\newblock \emph{Foundations and Trends in Theoretical Computer Science},
  4\penalty0 (3-4):\penalty0 157--288, 2009.

\bibitem[Liang et~al.(2014)Liang, Balcan, Kanchanapally, and
  Woodruff]{liang2014improved}
Y.~Liang, M.-F.~F. Balcan, V.~Kanchanapally, and D.~Woodruff.
\newblock Improved distributed principal component analysis.
\newblock In \emph{Advances in Neural Information Processing Systems}, pages
  3113--3121, 2014.

\bibitem[Mahoney(2011)]{m11}
M.~W. Mahoney.
\newblock Randomized algorithms for matrices and data.
\newblock \emph{Foundations and Trends in Machine Learning}, 3\penalty0
  (2):\penalty0 123--224, 2011.

\bibitem[Meng and Mahoney(2013)]{mm13}
X.~Meng and M.~W. Mahoney.
\newblock Low-distortion subspace embeddings in input-sparsity time and
  applications to robust linear regression.
\newblock In \emph{Proceedings of the forty-fifth annual ACM symposium on
  Theory of computing}, pages 91--100. ACM,
  \url{https://arxiv.org/pdf/1210.3135}, 2013.

\bibitem[Nelson and Nguy{\^e}n(2013)]{nn13}
J.~Nelson and H.~L. Nguy{\^e}n.
\newblock Osnap: Faster numerical linear algebra algorithms via sparser
  subspace embeddings.
\newblock In \emph{2013 IEEE 54th Annual Symposium on Foundations of Computer
  Science (FOCS)}, pages 117--126. IEEE, \url{https://arxiv.org/pdf/1211.1002},
  2013.

\bibitem[Oh and Yun(2005)]{oh2005}
S.~Oh, S.~Kwon and J.~Yun.
\newblock A method for structured linear total least norm on blind
  deconvolution problem.
\newblock \emph{Applied Mathematics and Computing}, 19:\penalty0 151--164,
  2005.

\bibitem[Pagh(2013)]{p13}
R.~Pagh.
\newblock Compressed matrix multiplication.
\newblock \emph{ACM Trans. Comput. Theory}, 5\penalty0 (3):\penalty0 9:1--9:17,
  2013.

\bibitem[Patrascu and Thorup(2012)]{pt12}
M.~Patrascu and M.~Thorup.
\newblock The power of simple tabulation hashing.
\newblock \emph{J. ACM}, 59\penalty0 (3):\penalty0 14, 2012.

\bibitem[Pham and Pagh(2013)]{pp13}
N.~Pham and R.~Pagh.
\newblock Fast and scalable polynomial kernels via explicit feature maps.
\newblock In \emph{Proceedings of the 19th ACM SIGKDD international conference
  on Knowledge discovery and data mining(KDD)}, pages 239--247. ACM, 2013.

\bibitem[Rousseeuw and Leroy(2005)]{rousseeuw2005robust}
P.~J. Rousseeuw and A.~M. Leroy.
\newblock \emph{Robust regression and outlier detection}, volume 589.
\newblock John wiley \& sons, 2005.

\bibitem[Sohler and Woodruff(2011)]{sw11}
C.~Sohler and D.~P. Woodruff.
\newblock Subspace embeddings for the $\ell_1$-norm with applications.
\newblock In \emph{Proceedings of the forty-third annual ACM symposium on
  Theory of computing (STOC)}, pages 755--764. ACM, 2011.

\bibitem[Song et~al.(2017{\natexlab{a}})Song, Woodruff, and Zhong]{swz17}
Z.~Song, D.~P. Woodruff, and P.~Zhong.
\newblock Low rank approximation with entrywise $\ell_1$-norm error.
\newblock In \emph{Proceedings of the 49th Annual Symposium on the Theory of
  Computing (STOC)}. ACM, \url{https://arxiv.org/pdf/1611.00898},
  2017{\natexlab{a}}.

\bibitem[Song et~al.(2017{\natexlab{b}})Song, Woodruff, and Zhong]{swz17b}
Z.~Song, D.~P. Woodruff, and P.~Zhong.
\newblock Relative error tensor low rank approximation.
\newblock \emph{arXiv preprint arXiv:1704.08246}, 2017{\natexlab{b}}.

\bibitem[Van~Loan(1992)]{vanbook}
C.~Van~Loan.
\newblock \emph{Computational frameworks for the fast {F}ourier transform},
  volume~10 of \emph{Frontiers in Applied Mathematics}.
\newblock Society for Industrial and Applied Mathematics (SIAM), Philadelphia,
  PA, 1992.

\bibitem[Van~Loan and Pitsianis(1993)]{vanLoanKron}
C.~F. Van~Loan and N.~Pitsianis.
\newblock Approximation with {K}ronecker products.
\newblock In \emph{Linear algebra for large scale and real-time applications
  ({L}euven, 1992)}, volume 232 of \emph{NATO Adv. Sci. Inst. Ser. E Appl.
  Sci.}, pages 293--314. Kluwer Acad. Publ., Dordrecht, 1993.

\bibitem[Woodruff(2014)]{w14}
D.~P. Woodruff.
\newblock Sketching as a tool for numerical linear algebra.
\newblock \emph{Foundations and Trends in Theoretical Computer Science},
  10\penalty0 (1-2):\penalty0 1--157, 2014.

\bibitem[Zheng et~al.(2012)Zheng, Liu, Sugimoto, Yan, and
  Okutomi]{zheng2012practical}
Y.~Zheng, G.~Liu, S.~Sugimoto, S.~Yan, and M.~Okutomi.
\newblock Practical low-rank matrix approximation under robust l 1-norm.
\newblock In \emph{Computer Vision and Pattern Recognition (CVPR), 2012 IEEE
  Conference on}, pages 1410--1417. IEEE, 2012.

\end{thebibliography}


\newpage
\onecolumn
\appendix 

\section{Background: {\sc CountSketch} and {\sc TensorSketch}}
We start by describing the {\sc CountSketch} transform~\cite{ccf04}.
Let $m$ be the target dimension.
When applied to $n$-dimensional vectors,
the transform is specified by a $2$-wise independent hash function $h:[n] \rightarrow [m]$
and a $2$-wise independent sign function $s:[n] \rightarrow \{-1,+1\}$. When applied to $v$, 
the value at coordinate
$i$ of the output,  $i = 1, 2, \ldots, m$ is $\sum_{j \mid h(j) = i} s(j) v_j$. Note that {\sc CountSketch}
can be represented as an $m \times n$ matrix in which the $j$-th column contains a single non-zero entry $s(j)$
in the $h(j)$-th row.

We now describe the {\sc TensorSketch} transform~\cite{p13}.
Suppose we are given  points $v_i \in \mathbb{R}^{n_i}$, where $i=1,\ldots, q$ and so $\phi(v_1,\ldots,v_q)=v_1\otimes v_2 \otimes \cdots \otimes v_q\in \mathbb{R}^{n_1n_2\cdots n_q}$,
and the target dimension is again $m$. The transform is specified using $q$ $3$-wise independent hash
functions $h_i: [n_i] \rightarrow [m]$,
and $q$ $4$-wise independent sign functions $s_i: [n_i] \rightarrow \{+1, -1\}$, where $i=1,\ldots, q$.
{\sc TensorSketch} applied to $v_1,\ldots \otimes v_q$ is then {\sc CountSketch} applied to $\phi(v_1,\ldots,v_q)$ with hash function
$H:[n_1n_2\cdots n_q] \rightarrow [m]$ and sign function $S:[n_1n_2\cdots n_q] \rightarrow \{+1, -1\}$ defined as follows:
$$
H(i_1, \ldots, i_q) = h_1(i_1) + h_2(i_2) + \cdots + h_q(i_q) \bmod m,
$$
and
$$
S(i_1, \ldots, i_q) = s_1(i_1) \cdot s_2(i_2) \cdots s_q(i_q),
$$
where $i_j\in [n_j]$. It is well-known that if $H$ is constructed this way, then it is $3$-wise independent~\cite{cw79,pt12}.
Unlike the work of Pham and Pagh~\cite{pp13}, which only used that $H$ was $2$-wise independent, our analysis needs this stronger property of $H$.

The {\sc TensorSketch} transform can be applied to $v_1,\ldots, v_q$ without computing $\phi(v_1,\ldots,v_q)$ as follows. Let $v_j=(v_{j_\ell}) \in \mathbb{R}^{n_j}$. First, compute the polynomials
$$
p_{\ell}(x) = \sum_{i=0}^{B-1} x^i \sum_{j_\ell \mid h_{\ell} (j_\ell) = i} v_{j_\ell} \cdot s_{\ell}(j_\ell),
$$
for $\ell = 1, 2, \ldots, q$. A calculation \cite{p13} shows
\begin{equation*}
\prod_{\ell = 1}^q p_{\ell}(x) \bmod (x^B -1) = \sum_{i=0}^{B-1} x^i  \sum_{(j_1, \ldots, j_q) \mid H(j_1, \ldots, j_q) = i} v_{j_1} \cdots v_{j_q} S(j_1, \ldots, j_q),
\end{equation*}
that is, the coefficients of the product of the $q$ polynomials $\bmod \ (x^m-1)$ form the value of {\sc TensorSketch($v_1,\ldots,v_q$)}. Pagh observed that this product of polynomials can be computed in
$O(q m \log m)$ time using the Fast Fourier Transform. As it takes $O(q \max(\nnz(v_i)))$ time
to form the $q$ polynomials, the overall time to compute {\sc TensorSketch}$(v)$ is
$O(q(\max(\nnz(v_i) )+ m \log m))$.

\section{{\sc TensorSketch} is an Oblivious Subspace Embedding (OSE)}\label{Sec:TensorSketch}

Let $S$ be the $ m \times (n_1 n_2\cdots n_q)$ matrix such that {\sc TensorSketch} $(v_1,\ldots,v_q)$ is $S \cdot \phi(v_1,\ldots,v_q)$ for a randomly
selected {\sc TensorSketch}. Notice that $S$ is a random matrix.
In the rest of the paper, we refer to such a matrix as a {\sc TensorSketch} matrix with an appropriate number of rows, i.e., the number of hash buckets.
We will show that $S$ is an oblivious subspace embedding for subspaces in $\R^{n_1n_2\cdots n_q}$ for appropriate values of $m$.
Notice that $S$ has exactly one non-zero entry per column. The index of the non-zero in the column
$(i_1, \ldots, i_q)$ is $H(i_1, \ldots, i_q) = \sum_{j=1}^q h_j(i_j)\bmod{m}$. Let $\delta_{a,b}$ be the
indicator random variable of whether $S_{a,b}$ is non-zero. The sign of the non-zero entry in column
$(i_1, \ldots, i_q)$ is $S(i_1,\ldots, i_q)=\prod_{j=1}^q s_j(i_j)$. We show that the embedding matrix $S$ of {\sc TensorSketch} can be used to approximate matrix product and is an oblivious subspace embedding (OSE). 

\begin{theorem}
Let $S$ be the $ m \times (n_1 n_2\cdots n_q)$ matrix such that  $$
\mbox{ {\sc TensorSketch}}(v_1,\ldots,v_q)
$$ is $S \cdot \phi(v_1,\ldots,v_q)$ for a randomly
selected {\sc TensorSketch}. The matrix $S$ satisfies the following two properties.
\begin{enumerate}
\item ($\mathrm{Approximate~Matrix~Product:}$) Let $A$ and $B$ be matrices with $n_1n_2\cdots n_q$ rows. For $m \geq (2+3^q)/(\eps^2\delta)$, we have
$$
\Pr_{S} \left[ \|A^\top S^\top  S B - A^\top  B\|_F^2 \le \eps^2 \|A\|_F^2 \|B\|_F^2\right] \geq 1-\delta.
$$
\item ($\mathrm{Subspace~Embedding:}$) Consider a fixed $k$-dimensional subspace $V$. If $m \geq k^2 (2+3^q)/(\eps^2\delta)$, then with probability at least $1-\delta$, $\|Sx\| = (1\pm\eps)\|x\|$ simultaneously for all $x\in V$.
\end{enumerate}
\end{theorem}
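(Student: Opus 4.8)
The plan is to obtain both parts from a single second-moment estimate for the sketched inner product $\langle Sa,Sb\rangle$, and then deduce the subspace embedding from the approximate matrix product by a spectral-to-Frobenius-norm argument. I write a row index of $S$ as a multi-index $i=(i_1,\ldots,i_q)$, so that $(Sa)_r=\sum_{i:H(i)=r}S(i)\,a_i$ for a vector $a\in\R^{n_1\cdots n_q}$. First I would group the sum defining $\langle Sa,Sb\rangle=\sum_r(Sa)_r(Sb)_r$ by hash bucket and separate the diagonal ($i=j$, where $S(i)^2=1$) from the off-diagonal, obtaining
\begin{equation*}
\langle Sa,Sb\rangle=\langle a,b\rangle+\sum_{i\neq j}\mathbf{1}[H(i)=H(j)]\,S(i)S(j)\,a_ib_j,
\end{equation*}
so the error is exactly the off-diagonal sum. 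Because $S(i)S(j)=\prod_k s_k(i_k)s_k(j_k)$ and some coordinate must differ when $i\neq j$, the factorization over the independent sign functions gives $\E[S(i)S(j)]=0$, and since the $s_k$ are independent of the $h_k$ this shows the estimator is unbiased: $\E\langle Sa,Sb\rangle=\langle a,b\rangle$.

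\textbf{Variance and the $3^q$ factor (main step).} Since the estimator is unbiased, $\E[(\langle Sa,Sb\rangle-\langle a,b\rangle)^2]$ is its variance, which I would expand as a sum over pairs $(i,j)$ and $(i',j')$ with $i\neq j,\ i'\neq j'$ of $\E[\mathbf{1}[H(i)=H(j)]\,\mathbf{1}[H(i')=H(j')]]\cdot\E[S(i)S(j)S(i')S(j')]\cdot a_ib_ja_{i'}b_{j'}$. The sign expectation factors coordinatewise as $\prod_k\E[s_k(i_k)s_k(j_k)s_k(i'_k)s_k(j'_k)]$, and by the $4$-wise independence of each $s_k$ this factor is $1$ exactly when the four indices in coordinate $k$ pair up with even multiplicity, and $0$ otherwise. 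The two \emph{aligned} patterns $(i',j')=(i,j)$ and $(i',j')=(j,i)$ contribute $\frac1m(\|a\|^2\|b\|^2+\langle a,b\rangle^2)\le \frac2m\|a\|^2\|b\|^2$, using $\Pr[H(i)=H(j)]=1/m$ for $i\neq j$ (from pairwise independence of $H$) and Cauchy--Schwarz. The remaining contribution comes from \emph{mixed} patterns in which, coordinate by coordinate, the four indices pair up in one of a constant number of ways (within-pairs versus cross-pairs); enumerating these independently across the $q$ coordinates yields at most $3^q$ configurations, which is the origin of the stated factor. The hard part is bounding each mixed configuration: I would use $3$-wise independence of $H$ to control the joint collision probabilities $\E[\mathbf{1}[H(i)=H(j)]\mathbf{1}[H(i')=H(j')]]$ for the overlapping index patterns, and Cauchy--Schwarz/AM--GM to reduce every surviving $\sum a_ib_ja_{i'}b_{j'}$ back to $\|a\|^2\|b\|^2$. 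Collecting aligned and mixed terms gives the per-pair bound $\frac{2+3^q}{m}\|a\|^2\|b\|^2$; summing over the columns of $A$ and $B$ yields $\E\|A^\top S^\top S B-A^\top B\|_F^2\le \frac{2+3^q}{m}\|A\|_F^2\|B\|_F^2$. With $m\ge(2+3^q)/(\eps^2\delta)$ the right-hand side is at most $\eps^2\delta\|A\|_F^2\|B\|_F^2$, and Markov's inequality finishes part~1.

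\textbf{From AMP to the subspace embedding.} For part~2, let $U\in\R^{(n_1\cdots n_q)\times k}$ be an orthonormal basis of $V$, so that $\|Sx\|=(1\pm\eps)\|x\|$ for all $x\in V$ is equivalent to $\|U^\top S^\top S U-I_k\|_2\le \eps$ (then $\|SUy\|^2=y^\top U^\top S^\top S U\,y\in[(1-\eps)\|y\|^2,(1+\eps)\|y\|^2]$ gives the claim after taking square roots, since $\|Uy\|=\|y\|$). I would bound the spectral norm by the Frobenius norm and apply part~1 with $A=B=U$ and accuracy $\eps'=\eps/k$: since $\|U\|_F^2=k$, the matrix-product guarantee gives $\|U^\top S^\top S U-I_k\|_F\le \eps' k=\eps$ with probability $1-\delta$, provided $m\ge(2+3^q)/((\eps')^2\delta)=k^2(2+3^q)/(\eps^2\delta)$, which is exactly the stated row count. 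The only genuine obstacle in the whole argument is the fourth-moment bookkeeping of the mixed patterns in the variance step, where both the $4$-wise independence of the signs and the $3$-wise independence of $H$ are used; everything else is the standard \textsc{CountSketch}-to-OSE reduction.
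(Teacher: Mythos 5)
Your proposal is correct and follows essentially the same route as the paper's proof: a second-moment analysis of the sketched product (unbiasedness from the sign functions, the even-multiplicity condition from $4$-wise independence, the $3^q$ coordinatewise pairing configurations, $3$-wise independence of $H$ for the joint collision probabilities, and Cauchy--Schwarz to collapse the mixed terms), followed by Markov's inequality, and then the standard reduction from approximate matrix product with accuracy $\eps/k$ on an orthonormal basis to the subspace embedding via the Frobenius-norm bound on $\|U^\top S^\top S U - I_k\|_2$. The only part you leave as a sketch --- the Cauchy--Schwarz bookkeeping for the mixed configurations --- is exactly the chain of inequalities the paper carries out explicitly, and your stated strategy for it is the right one.
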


We establish the theorem via two lemmas as in \cite{acw16a}. The first lemma proves the approximate matrix product property via a careful second moment analysis.

\begin{lemma}[Approximate matrix product]\label{lem:mp} Let $A$ and $B$ be matrices with $n_1 n_2\cdots n_q$ rows. For $m \geq (2+3^q)/(\eps^2\delta)$, we have
\begin{align*}
\Pr_{S} \left[ \|A^\top S^\top  S B - A^\top  B\|_F^2 \le \eps^2 \|A\|_F^2 \|B\|_F^2 \right] \ge 1-\delta.
\end{align*}
\end{lemma}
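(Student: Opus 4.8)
The plan is to prove the approximate matrix product property via a second-moment (variance) argument, following the standard template for CountSketch-type analyses but carefully tracking the weaker independence guarantees that TensorSketch provides. Since TensorSketch is exactly CountSketch applied with the composed hash $H$ and sign $S$, the matrix $S \in \mathbb{R}^{m \times (n_1\cdots n_q)}$ has one nonzero per column, equal to $S(i_1,\dots,i_q) = \prod_j s_j(i_j)$ placed in row $H(i_1,\dots,i_q)$. I introduce the indicator $\delta_{a,b}$ (as in the excerpt) for whether column $b$ hashes to row $a$, so that $(S^\top S)_{b,b'} = \sum_a \delta_{a,b}\delta_{a,b'}\, \sigma_b \sigma_{b'}$ where $\sigma_b$ is the sign of column $b$. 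The key structural fact is $\mathbb{E}[(S^\top S)_{b,b'}] = \mathbb{I}[b=b']$: the diagonal is deterministically $1$, and off-diagonal terms vanish in expectation because the signs $\sigma_b\sigma_{b'}$ average to zero. This immediately gives $\mathbb{E}[A^\top S^\top S B] = A^\top B$, so it remains to bound the variance.

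\textbf{Main computation.} First I would reduce to showing the bound entrywise: write
\[
\mathbb{E}\,\|A^\top S^\top S B - A^\top B\|_F^2 = \sum_{i,j} \mathbb{E}\big[(A^\top S^\top S B - A^\top B)_{i,j}^2\big],
\]
and then expand each entry as $\sum_{b,b'} A_{b,i} B_{b',j}\big((S^\top S)_{b,b'} - \mathbb{I}[b=b']\big)$. Squaring and taking expectations produces a fourfold sum over column indices $b_1,b_2,b_3,b_4$, and the heart of the proof is to determine which index patterns contribute. By Chebyshev/Markov it suffices to control this second moment and then conclude via Markov's inequality that the failure probability is at most $\delta$ for $m \geq (2+3^q)/(\eps^2\delta)$. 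I expect the cross terms $(S^\top S)_{b,b'}$ for $b\neq b'$ to contribute variance proportional to the collision probability $\Pr[H(b)=H(b')]$, which is $1/m$ by the $3$-wise (hence pairwise) independence of $H$, and the surviving sign correlations to require up to $4$-wise independence of the underlying $s_j$ — this is precisely why the construction uses $4$-wise independent sign functions and $3$-wise independent hashes.

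\textbf{The main obstacle} will be the bookkeeping of the factor $(2+3^q)$, which is where the tensor structure genuinely enters and distinguishes this from the plain CountSketch bound. When I expand the second moment, the sign terms $\prod_j s_j(i_j)$ do not factor as cleanly as in the one-dimensional case: the expectation $\mathbb{E}[\prod_j s_j(i_j) s_j(i'_j)]$ is $1$ only when, coordinate-by-coordinate, the indices agree in an even pattern, and otherwise vanishes. Counting the index configurations that survive both the hash-collision constraint and the sign constraint is what produces the $3^q$ term — morally, each of the $q$ coordinates contributes a small constant number of surviving "collision types," and these multiply across coordinates. I would organize this by classifying the quadruples $(b_1,b_2,b_3,b_4)$ according to which coordinates collide, carefully using that $H$ is $3$-wise independent (the stronger property the excerpt emphasizes, not merely $2$-wise as in Pham–Pagh) to control the joint hashing probabilities, and that the $s_j$ are $4$-wise independent to evaluate the sign expectations exactly. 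Once the surviving terms are shown to sum to at most $(2+3^q)\|A\|_F^2\|B\|_F^2/m$, the variance bound follows, and an application of Markov's inequality with the stated choice of $m$ finishes the proof.

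\textbf{To summarize the order of operations:} (1) set up notation via $\delta_{a,b}$ and the column signs; (2) verify $\mathbb{E}[S^\top S] = I$ to get unbiasedness; (3) expand $\mathbb{E}\|A^\top S^\top S B - A^\top B\|_F^2$ as a fourfold sum; (4) classify terms by coordinate-wise collision patterns, using $3$-wise independence of $H$ and $4$-wise independence of the $s_j$ to evaluate each contribution; (5) bound the total by $(2+3^q)\|A\|_F^2\|B\|_F^2/m$; and (6) apply Markov's inequality with $m \geq (2+3^q)/(\eps^2\delta)$.
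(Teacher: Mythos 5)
Your proposal follows essentially the same route as the paper's proof: unbiasedness of $S^\top S$, expansion of the second moment as a fourfold sum over column indices, classification of surviving quadruples by coordinate-wise agreement patterns (which is exactly where the paper's $3^q$ count of partitions of $[q]$ into three sets arises), use of $3$-wise independence of $H$ to bound joint collision probabilities by $1/m^3$ and $4$-wise independence of the signs to kill odd-multiplicity terms, and a final application of Markov's inequality. The only step you leave as an assertion rather than execute --- bounding the sum of surviving terms $\sum |A_{i_1,u}A_{i_2,u}B_{j_1,u'}B_{j_2,u'}|$ by $\|A_u\|_2^2\|B_{u'}\|_2^2$ per partition --- is handled in the paper by a chain of Cauchy--Schwarz inequalities on the reparametrized indices, and poses no obstacle to your outline.
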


\begin{proof}
The proof follows that in \cite{acw16a}. Let $C = A^\top  S^\top  S B$. We have
\begin{equation*}
C_{u,u'} = \sum_{t=1}^m \sum_{i,j\in {[n_1n_2\cdots n_q]}} S(i)S(j)\delta_{t,i}\delta_{t,j} A_{i,u} B_{j,u'}=\sum_{t=1}^m \sum_{i\ne j\in {[n_1n_2\cdots n_q]}} S(i)S(j)\delta_{t,i}\delta_{t,i} A_{i,u} B_{j,u'} + (A^\top  B)_{u,u'}
\end{equation*}
Thus, $\E[C_{u,u'}] = (A^\top B)_{u,u'}$.

Next, we analyze $\E[((C-A^\top B)_{u,u'})^2]$. We have
\begin{align*}
((C - A^\top B)_{u,u'})^2 =
\sum_{t_1, t_2 = 1}^m \sum_{i_1\ne j_1, i_2\ne j_2 \in {[n_1n_2\cdots n_q]}} &S(i_1)S(i_2)S(j_1)S(j_2) \cdot
 \delta_{t_1,i_1}\delta_{t_1,j_1}\delta_{t_2,i_2}\delta_{t_2,j_2} \cdot A_{i_1,u}A_{i_2, u}B_{j_1, u'}B_{j_2, u'}
\end{align*}
For a term in the summation on the right hand side to have a non-zero expectation, it must be the case that $\E[S(i_1)S(i_2)S(j_1)S(j_2)]\ne 0$. Note that $S(i_1)S(i_2)S(j_1)S(j_2)$ is a product of random signs (possibly with multiplicities) where the random signs in different coordinates in $\{1,\ldots, q\}$ are independent and they are 4-wise independent within each coordinate. Thus, $\E[S(i_1)S(i_2)S(j_1)S(j_2)]$ is either $1$ or $0$. For the expectation to be $1$, all random signs must appear with even multiplicities. In other words, in each of the $q$ coordinates, the 4 coordinates of $i_1, i_2, j_1, j_2$ must be the same number appearing 4 times or 2 distinct numbers, each appearing twice. All the subsequent claims in the proof regarding $i_1, i_2, j_1, j_2$ agreeing on some coordinates follow from this property.

Let $S_1$ be the set of coordinates where $i_1$ and $i_2$ agree. Note that $j_1$ and $j_2$ must also agree in all coordinates in $S_1$ by the above argument. Let $S_2 \subset [q]\setminus S_1$ be the coordinates among the remaining where $i_1$ and $j_1$ agree. Finally, let $S_3 = [q]\setminus (S_1\cup S_2)$. All coordinates in $S_3$ of $i_1$ and $j_2$ must agree. Similarly as before, note that $i_2$ and $j_2$ agree on all coordinates in $S_2$ and $i_2$ and $j_1$ agree on all coordinates in $S_3$. We can rewrite $i_1 = (a,b,c), i_2 = (a,e,f), j_1 = (g, b, f), j_2 = (g, e, c)$ where $a=(a_\ell),g=(g_\ell) $ with $\ell \in S_1$, $b=(b_\ell),e=(e_\ell) $ with $\ell \in S_2$ and $c=(c_\ell),f=(f_\ell) $ with $\ell \in S_3$.

%



First we show that the contribution of the terms where $i_1=i_2$ or $i_1=j_2$ is bounded by $\frac{2\|A_u\|_2^2\|B_{u'}\|_2^2}{m}$, where $A_u$ is the $u$th column of $A$ and $B_{u'}$ is the $u'$th column of $B$. Indeed, consider the case $i_1=i_2$. As observed before, we must have $j_1=j_2$ to get a non-zero contribution. Note that if $t_1\ne t_2$, we always have $\delta_{t_1, i_1}\delta_{t_2, i_2} = 0$ as $H(i_1)$ cannot be equal to both $t_1$ and $t_2$. Thus, for fixed $i_1=i_2, j_1=j_2$,
 \begin{align*}
 & ~ \E \left[\sum_{t_1, t_2 = 1}^m  S(i_1)S(i_2)S(j_1)S(j_2) \cdot
\delta_{t_1,i_1}\delta_{t_1,j_1}\delta_{t_2,i_2}\delta_{t_2,j_2} \cdot
 A_{i_1,u}A_{i_2, u}B_{j_1, u'}B_{j_2, u'}\right] \\
 = & ~ \E\left[\sum_{t_1=1}^m \delta_{i_1,t_1}^2 \delta_{j_1,t_1}^2 A_{i_1, u}^2 B_{j_1, u'}^2\right] \\
 = & ~ \frac{A_{i_1,u}^2 B_{j_1, u'}^2}{m}
\end{align*}
Summing over all possible values of $i_1, j_1$, we get the desired bound of $\frac{\|A_u\|_2^2\|B_{u'}\|_2^2}{m}$. The case $i_1=j_2$ is analogous.

Next we compute the contribution of the terms where $i_1\ne i_2, j_1, j_2$ i.e., there are at least 3 distinct numbers among $i_1, i_2, j_1, j_2$. Notice that $\E[\delta_{t_1,i_1}\delta_{t_1,j_1}\delta_{t_2,i_2}\delta_{t_2,j_2}] \le \frac{1}{m^3}$ because the $\delta_{t,i}$'s are 3-wise independent. For fixed $i_1, j_1, i_2, j_2$, there are $m^2$ choices of $t_1, t_2$ so the total contribution to the expectation from terms with the same $i_1, j_1, i_2, j_2$ is bounded by $m^2\cdot \frac{1}{m^3}\cdot |A_{i_1,u}A_{i_2, u}B_{j_1, u'}B_{j_2, u'}|=\frac{1}{m}|A_{i_1,u}A_{i_2, u}B_{j_1, u'}B_{j_2, u'}|$.

Therefore,
\begin{align*}
& ~\E[((C - A^\top B)_{u,u'})^2]\\
\le & ~ \frac{2\|A_u\|_2^2\|B_{u'}\|_2^2}{m}+\frac{1}{m}\sum_{\textnormal{partition }S_1,S_2,S_3} \sum_{a,g,b,e,c,f} |A_{(a,b,c),u}B_{(g,b,f),u'}A_{(a,e,f),u}B_{(g,e,c),u'}|\\
\le & ~ \frac{2\|A_u\|_2^2\|B_{u'}\|_2^2}{m} + \frac{3^q}{m}\sum_{a,b,c,g,e,f} |A_{(a,b,c),u}B_{(g,b,f),u'}A_{(a,e,f),u}B_{(g,e,c),u'}|\\
\le & ~ \frac{2\|A_u\|_2^2\|B_{u'}\|_2^2}{m} + \frac{3^q}{m}\sum_{g,e,f} \biggl(\sum_{a,b,c}A_{(a,b,c),u}^2\biggr)^{1/2}\biggl(\sum_{a,b,c}B_{(g,b,f),u'}^2 A_{(a,e,f),u}^2 B_{(g,e,c),u'}^2\biggr)^{1/2}\\
= & ~ \frac{2\|A_u\|_2^2\|B_{u'}\|_2^2}{m} + \frac{3^q\|A_u\|}{m}\sum_{g,e,f} \biggl(\sum_{b}B_{(g,b,f),u'}^2\biggr)^{1/2} \biggl(\sum_{a,c}A_{(a,e,f),u}^2 B_{(g,e,c),u'}^2\biggr)^{1/2}\\
\le & ~ \frac{2\|A_u\|_2^2\|B_{u'}\|_2^2}{m} + \frac{3^q\|A_u\|}{m}\sum_{e} \biggl(\sum_{b,g,f}B_{(g,b,f),u'}^2\biggr)^{1/2} \biggl(\sum_{a,c,g,f}A_{(a,e,f),u}^2 B_{(g,e,c),u'}^2\biggr)^{1/2}\\
= & ~ \frac{2\|A_u\|_2^2\|B_{u'}\|_2^2}{m} + \frac{3^q\|A_u\|\cdot\|B_{u'}\|}{m}\sum_{e} \biggl(\sum_{a,f}A_{(a,e,f),u}^2\biggr)^{1/2} \biggl(\sum_{g,c}B_{(g,e,c),u'}^2\biggr)^{1/2}\\
\le & ~ \frac{2\|A_u\|_2^2\|B_{u'}\|_2^2}{m} + \frac{3^q\|A_u\|\cdot \|B_{u'}\|}{m}\biggl(\sum_{a,e,f}A_{(a,e,f),u}^2\biggr)^{1/2} \biggl(\sum_{g,e,c}B_{(g,e,c),u'}^2\biggr)^{1/2}\\
= & ~ \frac{(2+3^q)\|A_u\|_2^2\|B_{u'}\|_2^2}{m},
\end{align*}
where the second inequality follows from the fact that there are at most $3^q$ partitions of $[q]$ into 3 sets. The other inequalities are from Cauchy-Schwarz.

Combining the above bounds, we have
$
\E[((C - A^\top B)_{u,u'})^2] \leq \frac{(2+3^q)\|A_u\|_2^2\|B_{u'}\|_2^2}{m}.
$
For $m\ge (2+3^q)/(\eps^2\delta)$, by the Markov inequality, $\|A^\top S^\top  S B - A^\top  B\|_F^2 \le \eps^2  \|A\|_F^2 \|B\|_F^2$ with probability $1-\delta$.\qed
\end{proof}

The second lemma proves that the subspace embedding property follows from the approximate matrix product property.

\begin{lemma}[Oblivious subspace embeddings]\label{lem:ose}
Consider a fixed $k$-dimensional subspace $V\subset \R^{n_1n_2\cdots n_q}$. If $m \geq k^2 (2+3^q)/(\eps^2\delta)$, then with probability at least $1-\delta$, $\|Sx\|_2 = (1\pm\eps)\|x\|_2$ simultaneously for all $x\in V$.
\end{lemma}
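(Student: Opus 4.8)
The plan is to derive the subspace-embedding guarantee directly from the approximate matrix product bound of Lemma~\ref{lem:mp}, which is the standard route from an AMP property to an OSE. First I would fix an orthonormal basis $U \in \R^{(n_1 n_2\cdots n_q)\times k}$ for the subspace $V$, so that $U^\top U = I_k$ and every $x\in V$ can be written as $x = Uy$ for a unique $y\in\R^k$ with $\|x\|_2 = \|y\|_2$. Since $\|Sx\|_2^2 = y^\top U^\top S^\top S U\, y$, the claim ``$\|Sx\|_2 = (1\pm\eps)\|x\|_2$ simultaneously for all $x\in V$'' is equivalent (up to replacing $\eps$ by a constant multiple) to the single spectral-norm bound $\|U^\top S^\top S U - I_k\|_2 \le \eps$, because this controls the quadratic form uniformly over all $y$.

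The key step is to control this spectral norm through the Frobenius norm, using $\|\cdot\|_2 \le \|\cdot\|_F$ together with $I_k = U^\top U$. I would apply Lemma~\ref{lem:mp} with $A = B = U$ and accuracy parameter $\eps' := \eps/k$: with probability at least $1-\delta$,
\[
\|U^\top S^\top S U - U^\top U\|_F^2 \le (\eps')^2\,\|U\|_F^2\,\|U\|_F^2.
\]
The crucial arithmetic is $\|U\|_F^2 = \trace(U^\top U) = k$, so the right-hand side equals $(\eps/k)^2 k^2 = \eps^2$, giving $\|U^\top S^\top S U - I_k\|_F \le \eps$ and hence $\|U^\top S^\top S U - I_k\|_2 \le \eps$. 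This is precisely where the $k^2$ factor in the row count appears: Lemma~\ref{lem:mp} with parameter $\eps'$ requires $m \ge (2+3^q)/((\eps')^2\delta) = k^2(2+3^q)/(\eps^2\delta)$, matching the hypothesis exactly.

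Finally I would translate the operator-norm bound back into the multiplicative guarantee. Writing $\|Sx\|_2^2 = \|y\|_2^2 + y^\top(U^\top S^\top S U - I_k)y$ and bounding the quadratic form by $\eps\|y\|_2^2$ yields $\|Sx\|_2^2 = (1\pm\eps)\|x\|_2^2$ for all $x\in V$ simultaneously; taking square roots (using $\sqrt{1+\eps}\le 1+\eps$ and $\sqrt{1-\eps}\ge 1-\eps$ for $\eps\in[0,1]$) gives $\|Sx\|_2 = (1\pm\eps)\|x\|_2$. I do not expect a genuine obstacle, since the whole argument is a parameter-bookkeeping reduction to Lemma~\ref{lem:mp}. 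The only points needing care are the substitution $\eps' = \eps/k$ combined with $\|U\|_F^2 = k$ (which is what produces the $k^2$ blow-up in $m$) and the harmless passage between the squared and unsquared norms at the end, which can be absorbed into a constant rescaling of $\eps$.
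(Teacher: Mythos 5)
Your proposal is correct and follows essentially the same route as the paper's proof: both apply Lemma~\ref{lem:mp} with the orthonormal basis of $V$ as both factors and accuracy $\eps/k$, use $\|U\|_F^2 = k$ to turn the Frobenius bound into $\|U^\top S^\top S U - I_k\|_F \le \eps$, and pass from squared to unsquared norms at the end. The only cosmetic difference is that the paper phrases the final step in terms of the singular values of $SB$ rather than quadratic forms, which is the same argument.
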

\begin{proof}
Let $B$ be a $(n_1n_2\cdots n_q) \times k$ matrix whose columns form an orthonormal basis of $V$. Thus, we have $B^\top  B = I_k$ and $\|B\|_F^2 = k$. The condition that $\|Sx\|_2 = (1\pm\eps)\|x\|_2$ simultaneously for all $x\in V$ is equivalent to the condition that the singular values of $SB$ are bounded by $1\pm \eps$. By Lemma~\ref{lem:mp}, for $m\geq (2 + 3^q)/((\eps/k)^2\delta)$, with probability at least $1-\delta$, we have
$$
\|B^\top  S^\top  S B -  B^\top  B\|_F^2 \le (\eps/k)^2 \|B\|_F^4 = \eps^2
$$
Thus, we have $\|B^\top S^\top  S B -  I_k\|_2 \le \|B^ \top S^\top  S B -  I_k\|_F \le \eps$. In other words, the squared singular values of $SB$ are bounded by $1\pm \eps$, implying that the singular values of $SB$ are also bounded by $1\pm \eps$. Note that $\|A\|_2$ for a matrix $A$ denotes its operator norm.\qed
\end{proof}

\section{Missing Proofs}\label{Sec:missing_proofs}
\subsection{Proofs for Tensor Product Least Square Regression}
\label{sec:proof_norm_two}

\restate{them:norm_two_reg}
\begin{proof}
It is easy to see that
$$
\|(A_1\otimes A_2\otimes \cdots \otimes A_q)x-b\|_2=\left\|\begin{bmatrix}(A_1\otimes A_2\otimes \cdots \otimes A_q)& b\end{bmatrix} \begin{bmatrix}x \cr  -1\end{bmatrix} \right\|_2,
$$
and identifying $$
y=\begin{bmatrix}(A_1\otimes A_2\otimes \cdots \otimes A_q) & b\end{bmatrix} \begin{bmatrix}x \cr  -1\end{bmatrix} \in \R^{n_1n_2\cdots n_q}
$$
and $y$ is a vector of a subspace $V\subset \R^{n_1 n_2\cdots n_q}$ with dimension at most $d_1 d_2\cdots d_q+1$, we can use Lemma  \ref{lem:ose} to conclude that
\begin{align*}
\Pr \left[ \left|\left\|S y\right\|_2-\|y\|_2\right| \leq  \epsilon \|y\|_2  \right] \geq 1-\delta 
\end{align*}
when $m=(d_1d_2\cdots d_q +1)^2 (2+3^q)/(\eps^2\delta)$. 
%

Thus we have
\begin{align*}
\|(A_1\otimes A_2\otimes \cdots \otimes A_q)\widetilde { x}-b\|_2	 \leq \frac{1}{1-\epsilon}
\|S(A_1\otimes A_2\otimes \cdots \otimes A_q) \widetilde { x}-Sb\|_2	
	\end{align*}
	and 
	\begin{align*}
\|S(A_1\otimes A_2\otimes \cdots \otimes A_q)  { x}-Sb\|_2	 \leq (1+\epsilon)
\|(A_1\otimes A_2\otimes \cdots \otimes A_q) { x}-b\|_2	
	\end{align*}
	hold with probability at least  $1-\delta$. Then using a union bound, we have
	\begin{align*}
& ~ \|(A_1\otimes A_2\otimes \cdots \otimes A_q) \widetilde { x}-b\|_2	 \\
 \leq & ~\frac{1}{1-\epsilon} \|S(A_1\otimes A_2\otimes \cdots \otimes A_q) \widetilde { x}-Sb\|_2	\\
\leq & ~ \frac{1}{1-\epsilon} \|S(A_1\otimes A_2\otimes \cdots \otimes A_q) { x}-Sb\|_2	\\
\leq & ~\frac{1+\epsilon}{1-\epsilon} \|(A_1\otimes A_2\otimes \cdots \otimes A_q)  { x}-b\|_2	
	\end{align*}
holds with probability at least $1-2\delta $. 
\end{proof}

\restate{them:norm_two_reg_positive}
\begin{proof}	
The proof of Theorem.~\ref{them:norm_two_reg_positive} is similar to the proof of theorem~\ref{them:norm_two_reg}. Denote $\tilde{x} = \min_{x\geq 0} \|S\mathcal{A}x - Sb\|_2$ and $x^* = \min_{x\geq 0}\|\mathcal{A}x - b\|_2$. Using Lemma.~\ref{lem:ose}, we have:
\begin{align} 
\|\mathcal{A}\tilde{x} - b\|_2 \leq \frac{1}{1-\epsilon}\|S\mathcal{A}\tilde{x} - Sb\|_2,
\end{align} with probability at least $1-\delta$, and 
\begin{align}
\|S\mathcal{A}{x}^* - Sb\|_2 \leq (1+\epsilon)\|\mathcal{A}{x}^* - b\|_2,
\end{align} with probability at least $1-\delta$.
Hence applying a union bound we have:
\begin{align}
&  ~ \|\mathcal{A}\tilde{x} - b\|_2 \\
\leq & ~ \frac{1}{1-\epsilon}\|S\mathcal{A}\tilde{x} - Sb\|_2\nonumber\\
\leq & ~\frac{1}{1-\epsilon}\|S\mathcal{A}{x}^* - Sb\|_2\nonumber\\
\leq & ~\frac{1+\epsilon}{1-\epsilon}\|\mathcal{A}{x}^* - b\|_2,
\end{align}  with probability at least $1-2\delta$. 
\end{proof}


\subsection{Proofs for P-Splines}
	
\restate{lem:reg}
\begin{proof}
Let $\hat{A}\in\R^{(n+d)\times d}$ have orthonormal columns
with $\range(\hat{A} )=\range( \twomat{A}{\sqrt{\lambda} L} )$.
(An explicit expression for one such $\hat{A}$ is given below.)
Let $\hat{b} \equiv \twomat{b}{0_d}$.
We have
\begin{equation}\label{eq ridge2}
\min_{y\in\R^d} \norm{\hat{A}y - \hat{b}}
\end{equation}
equivalent to $\left\|b- A x \right\|_2^2+\lambda \|Lx\|_2^2$,
, in the sense that for any $\hat{A}y\in\range(\hat{A})$, there is $x\in\R^d$ with
$\hat{A}y = \twomat{A}{\sqrt{\lambda} L} x$,
so that
$\norm{\hA y - \hat{b}}^2 = \norm{\twomat{A}{\sqrt{\lambda} L} x - \hat{b}}^2
=\norm{b-Ax}^2 +  \lambda\norm{L x}^2$.
Let $y^* = \argmin_{y\in\R^d} \norm{\hat{A}y - \hat{b}}$,
so that $\hA y^* = \twomat{Ax^*}{\sqrt{\lambda} L x^*}$.

Let $\hat{A} =  \twomat{U_1}{U_2}$, where $U_1\in\R^{n\times d}$ and
$U_2\in\R^{d\times d}$, so that $U_1$ is as in the lemma statement.

We define $\hS$ to be $\left[\begin{smallmatrix} S & 0_{m\times d}\\ 0_{d\times n} & \Iden_d \end{smallmatrix}\right]$ and $\hS$ satisfies Property~(\RN{1}) and (\RN{2}) of Lemma~\ref{lem:reg}.


Using $\norm{U_1^\top S^\top S U_1 - U_1^\top U_1}\le 1/4$, with constant probability
\begin{equation}\label{eq embed2}
\norm{\hA^\top \hS^\top \hS \hA - \Iden_d}
	= \norm{U_1^\top S^\top S U_1 + U_2^\top U_2 - \Iden_d}
	= \norm{U_1^\top S^\top S U_1 - U_1^\top U_1}
	\le 1/4.
\end{equation}
Using the normal equations for Eq.~\eqref{eq ridge2}, we have
\begin{equation*}
0
	= \hA^\top (\hat{b} - \hA y^*)
	= U_1^\top (b-Ax^*) - \sqrt{\lambda} U_2^\top x^*,
\end{equation*}
and so
\[
\hA^\top \hS^\top \hS (\hat{b} - \hA y^*)
	= U_1^\top S^\top S (b-Ax^*) - \sqrt{\lambda} U_2^\top x^*
	= U_1^\top S^\top S (b-Ax^*)  - U_1^\top (b-Ax^*).
\]
Using Property (\RN{2}) of Lemma~\ref{lem:reg}, with constant probability
\begin{align}
 & ~ \norm{\hA^\top \hS^\top \hS (\hat{b} - \hA y^*)} \notag \\
	   = & ~ \norm{U_1^\top S^\top S (b-Ax^*)  - U_1^\top (b-Ax^*)} \notag \\ 
	   \le & ~ \sqrt{\eps \OPT/2} \notag \\ 
	   = & ~\sqrt{\eps/2} \norm{\hat{b} - \hA y^*}. \label{eq prod2}
\end{align}
It follows by a standard result  from \eqref{eq embed2} and \eqref{eq prod2} that the solution
$\tilde{y} \equiv \argmin_{y\in\R^d} \norm{\hS(\hA y - \hat{b})}$ has
$\norm{\hA\tilde{y} - \hat{b}}\le (1+\eps)\min_{y\in\R^d} \norm{\hat{A}y - \hat{b}}$,
and therefore that $\tilde{x}$ satisfies the claim of the theorem.

For convenience we give the proof of the standard result: \eqref{eq embed2} implies that
$\hA^\top \hS^\top \hS\hA$ has smallest singular value at least $3/4$. The normal equations for the unsketched
and sketched problems are
$$\hA^\top(\hat{b} - \hA y^*) = 0 = \hA^\top \hS^\top \hS (\hat{b} - \hA \tilde{y}).$$
The normal equations for the unsketched case imply
$\norm{\hA \ty - \hb}^2 = \norm{\hA(\ty - y^*)}^2 + \norm{\hb - \hA y^*}^2$,
so it is enough to show that $\norm{\hA(\ty - y^*)}^2 = \norm{\ty - y^*}^2 \le \eps\OPT$.
We have
\begin{align*}
(3/4) \norm{\ty - y^*}
	   & \le \norm{\hA^\top \hS^\top \hS \hA(\ty - y^*)} & \text{by~Eq.~\eqref{eq embed2}}
	\\ & = \norm{\hA^\top \hS^\top \hS \hA(\ty - y^*) - \hA^\top \hS^\top \hS (\hat{b} - \hA \tilde{y})} &  \text{~by~Normal Equation}
	\\ & = \norm{\hA^\top \hS^\top \hS(\hb - \hA y^*)} &
	\\ & \le \sqrt{\eps\OPT/2}  & \text{by~Eq.~\eqref{eq prod2}},
\end{align*}
so that $\norm{\ty - y^*}^2 \le (4/3)^2\eps\OPT/2\le \eps\OPT$. The lemma follows.

\end{proof}

The following lemma computes the statistical dimension $\sd_\lambda(A,L)$ that will be used for computing the  number of rows of sketching matrix $S$. 
\begin{lemma}\label{lem U1 size}
For $U_1$ as in Lemma~\ref{lem:reg}, $\normF{U_1}^2 = \sd_\lambda(A,L)  =  \sum_i 1/(1+ \lambda/\gamma_i^2)+d-p$, where
$A$ has singular values $\sigma_i$. Also $\norm{U_1} = \max\{1/\sqrt{1+ \lambda/\gamma_1^2},\, 1\}$.
\end{lemma}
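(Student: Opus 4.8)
The plan is to produce an explicit orthonormal basis $\hA$ for $\twomat{A}{\sqrt{\lambda}L}$ directly from the GSVD, read off $U_1$ as its top $n$ rows in diagonalized form, and then evaluate both norms with essentially no further work. Writing the GSVD as $A = U D_A\, RQ^\top$ and $\sqrt{\lambda}L = \sqrt{\lambda}\,V D_L\, RQ^\top$, where $D_A^\top D_A = \diag(\Sigma^2, I_{d-p})$ and $D_L^\top D_L = \diag(\Omega^2, 0_{d-p})$, we have $\twomat{A}{\sqrt{\lambda}L} = \left[\begin{smallmatrix} U & 0 \\ 0 & V \end{smallmatrix}\right]\twomat{D_A}{\sqrt{\lambda}D_L}\,RQ^\top$. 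Since $RQ^\top$ is an invertible $d\times d$ right factor it does not change the column space, so an orthonormal basis is obtained by orthonormalizing $N := \twomat{D_A}{\sqrt{\lambda}D_L}$ and left-multiplying by the block-orthogonal matrix $\left[\begin{smallmatrix} U & 0 \\ 0 & V \end{smallmatrix}\right]$.

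The key simplification is that $N^\top N = D_A^\top D_A + \lambda D_L^\top D_L = \diag(\Sigma^2+\lambda\Omega^2,\, I_{d-p}) =: D_S$ is \emph{diagonal} (here the GSVD identity $\Sigma^\top\Sigma + \Omega^\top\Omega = I_p$ is what keeps the structure clean). Hence the orthonormalization is simply $N D_S^{-1/2}$, giving $\hA = \left[\begin{smallmatrix} U & 0 \\ 0 & V \end{smallmatrix}\right] N D_S^{-1/2}$, whose top $n$ rows are exactly $U_1 = U D_A D_S^{-1/2}$. I should note that any valid orthonormal basis differs from this one only by right-multiplication by an orthogonal matrix, which preserves both the Frobenius and operator norms, so the resulting values are well defined.

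For the Frobenius norm, since $U$ has orthonormal columns, $\normF{U_1}^2 = \normF{D_A D_S^{-1/2}}^2 = \tr\!\big(D_A^\top D_A\, D_S^{-1}\big) = \sum_{i=1}^p \tfrac{\sigma_i^2}{\sigma_i^2+\lambda\mu_i^2} + (d-p)$. Dividing numerator and denominator of each term by $\sigma_i^2$ and using $\gamma_i = \sigma_i/\mu_i$ turns it into $\sum_i 1/(1+\lambda/\gamma_i^2) + (d-p) = \sd_\lambda(A,L)$, as claimed. (Equivalently, one can avoid writing $\hA$ and just compute $\normF{U_1}^2 = \tr\big(A^\top A\,(A^\top A + \lambda L^\top L)^{-1}\big)$, then diagonalize with the GSVD and cancel the $RQ^\top$ factors by cyclicity of the trace.) For the operator norm, $\norm{U_1} = \norm{D_A D_S^{-1/2}}$ is the largest diagonal entry of the diagonal matrix $D_A D_S^{-1/2}$, i.e. $\max\{\max_i 1/\sqrt{1+\lambda/\gamma_i^2},\,1\}$, where the trailing $1$ comes from the $I_{d-p}$ block.

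The norm evaluations are immediate once $U_1$ is in the above diagonal form, so the only real care needed is bookkeeping. The main things to pin down are: verifying the block-diagonal form of $N^\top N$ from the GSVD (including the zero-padding when $n>d$, which affects neither norm), confirming basis-independence of the two quantities, and tracking the ordering of the $\gamma_i$ to identify which one attains the maximum in the operator-norm expression. This last point --- reconciling the stated ordering of $\sigma_i,\mu_i$ (hence of $\gamma_i$) with the index appearing in the claimed operator-norm formula --- is the one place where one must be careful; the computation itself offers no obstacle.
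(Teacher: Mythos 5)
Your proof is correct and takes essentially the same route as the paper: both construct the explicit orthonormal basis $\hA$ from the GSVD by normalizing with $D=\left(\begin{smallmatrix}\Sigma^\top\Sigma+\lambda\Omega^\top\Omega & 0\\ 0 & I_{d-p}\end{smallmatrix}\right)^{-1/2}$, read off $U_1 = U\left(\begin{smallmatrix}\Sigma & 0\\ 0 & I_{d-p}\end{smallmatrix}\right)D$, and evaluate $\normF{U_1}^2$ by dropping the orthonormal factor $U$. Yours is in fact slightly more complete: the paper stops at the Frobenius-norm identity, while you also verify the operator-norm claim, justify basis-independence of both quantities, and flag the ordering mismatch between the paper's convention ($\gamma_i$ increasing, so the maximum is attained at $\gamma_p$) and the index $\gamma_1$ appearing in the stated formula.
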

\begin{proof}
Suppose we have the GSVD of $(A,\, L)$. Let 
$$
D\equiv \begin{bmatrix}\Sigma^\top \Sigma + \lambda \Omega^\top \Omega & { 0}_{p\times (n-p)}\cr { 0}_{(n-p)\times p}& I_{d-p}
\end{bmatrix}^{-1/2}.
$$
Then
$$\hA = \twomat{U\begin{bmatrix}\Sigma& { 0}_{p\times (n-p)}\cr { 0}_{(n-p)\times p}& I_{d-p}
\end{bmatrix} D}{\sqrt{\lambda} V\begin{bmatrix}\Omega  &{
0}_{p \times (n-p)}\end{bmatrix} D}
$$
has $\hA^\top \hA = \Iden_d$, and for given $x$, there is $y=D^{-1} RQ^\top x$ with
$\hA y = \twomat{A}{\sqrt{\lambda} L} x$.
We have $\normF{U_1}^2 = \left\|{U\begin{bmatrix}\Sigma& { 0}_{p\times (n-p)}\cr { 0}_{(n-p)\times p}& I_{d-p}
\end{bmatrix} D}\right\|_F^2 =  \left\|{\begin{bmatrix}\Sigma& { 0}_{p\times (n-p)}\cr { 0}_{(n-p)\times p}& I_{d-p}
\end{bmatrix} D}\right\|_F^2=  \sum_{i=1}^p 1/(1+ \lambda/\gamma_i^2)+d-p$
as claimed.  
\end{proof}

\restate{thm:size_of_S}
\begin{proof}
Recall that $\sd_\lambda(A,L) = \normF{U_1}^2$.
Sparse embedding distributions satisfy the bound for approximate matrix multiplication
$$\normF{W^\top S^\top S H - W^\top H} \le C\normF{W}\normF{H}/\sqrt{m},$$
for a constant $C$ \citep{cw13,mm13,nn13}; this is also true of OSE matrices.
We set $W=H=U_1$ and use $\norm{X} \le \normF{X}$ for all $X$ and
$m\ge K \normF{U_1}^4$ to obtain Property~(\RN{1}) of Lemma~\ref{lem:reg}, and set $W=U_1$, $H=b - Ax^*$
and use $m\ge K \normF{U_1}^2/\eps$ to obtain Property~(\RN{2}) of Lemma~\ref{lem:reg}. (Here the bound is slightly stronger
than Property~(\RN{2}), holding for $\lambda=0$.) With Property~(\RN{1}) and Property~(\RN{2}), the
claim for $\tilde{x}$ from a sparse embedding follows using Lemma~4.1.
\end{proof}

\subsection{Proofs for Tensor Product $\ell_1$ Regression}

\restate{lem:lpl2}
\begin{proof}
This lemma is similar to arguments in \cite{CDMMMW}, we simply adjust notation and parameters for completeness.
Applying Theorem~\ref{thm:jlmain}, we have that
with probability at least $1-\prod_{i=1}^q (n_i/w_i) \delta$, for all $x \in \mathbb{R}^r$, if we consider $y = \cA x$ and
write $y^\top = [z_1^\top, z_2^\top, \ldots, z_{\prod_{i=1}^q n_i/w_i}]^\top $, then for all $i \in [\prod_{i=1}^q n_i/w_i]$,  
\begin{eqnarray*}
\sqrt{\textstyle\frac 1 2}\|{z_i}\|_2\leq \|S_i z_i \|_2\leq 
\sqrt{\textstyle\frac 3 2}\|{z_i}\|_2,
\end{eqnarray*}
where $S_{i} \in \R^{m_i \times \prod_{i=1}^q w_i}$.  In the following, suppose $m_i=t$. By relating the $2$-norm and the $p$-norm, for $1 \leq p \leq 2$, we have
$$\|{S_i z_i}\|_p
	\le t^{1/p-1/2}\|{Sz_i}\|_2
	\le t^{1/p-1/2} \sqrt{\textstyle\frac32} \|{z_i}\|_2
	\le t^{1/p-1/2} \sqrt{\textstyle\frac32} \|{z_i}\|_p,$$
and similarly,
$$
\| {S_i z_i}\|_p
	\ge \|{S_iz_i}\|_2
	\ge \sqrt{\textstyle\frac12}\|{z_i}\|_2
	\ge \sqrt{\textstyle\frac12} w^{1/2-1/p}\|{z_i}\|_p,\, w=\prod_{j=1}^q w_j. 
$$
If $p > 2$, then 
$$
\|{S_i z_i}\|_p
	\le \|{S_i z_i}\|_2
	\le \sqrt{\textstyle\frac32} \|{z_i}\|_2
	\le \sqrt{\textstyle\frac32} w^{1/2 - 1/p} \|{z_i}\|_p,$$
and similarly,
$$
\| {S_i z_i}\|_p
	\ge t^{1/p-1/2}\|{S_i z_i}\|_2
	\ge t^{1/p-1/2} \sqrt{\textstyle\frac12} \|{z_i}\|_2
 	\ge t^{1/p-1/2} \sqrt{\textstyle\frac12}  \|{z_i}\|_p.
$$
Since $\|{\cA x}\|_p^p = \|{y}\|_p^p = \sum_i \|{z_i}\|_p^p$
and $\|{ S \cA x}\|_p^p = \sum_i \|{S_iz_i}\|_p^p$,
for $p\in [1,2]$ we have with probability $1-\prod_{i=1}^q (n_i/w_i) \delta$
\[
\sqrt{\textstyle\frac12} w^{1/2-1/p}\|{\cA x}\|_p
	\le \|{S \cA x}\|_p \le \sqrt{\textstyle\frac32} t^{1/p-1/2} \|{\cA x}\|_p,
\]
and for $p\in [2,\infty)$ with probability $1-\prod_{i=1}^q (n_i/w_i) \delta$
\[
\sqrt{\textstyle\frac12} t^{1/p-1/2} \|{\cA x}\|_p
	\le \|{S \cA x}\|_p \le \sqrt{\textstyle\frac32} w^{1/2 - 1/p} \|{\cA x}\|_p.
\]
In either case,
\begin{equation}\label{eqn:first}
\|{\cA x}\|_p \le \gamma_p \|{S \cA x}\|_p \le \sqrt{3} (tw)^{|1/p-1/2|}\|{\cA x}\|_p.
\end{equation}



We have, from the definition
of an $(\alpha,\beta,p)$-well-conditioned basis, that 
\begin{eqnarray}\label{eqn:second}
\|S \cA U \|_p \leq \alpha 
\end{eqnarray}
and for all $x \in \mathbb{R}^d$,
\begin{eqnarray}\label{eqn:third}
\|x\|_q \leq \beta \|S \cA Ux\|_p.
\end{eqnarray}

Combining (\ref{eqn:first}) and (\ref{eqn:second}), we have that with probability at least $1-\prod_{i=1}^q (n_i/w_i) \delta$,
\begin{eqnarray*}
\|\cA U/(r\gamma_p) \|_p \leq \sum_i \|\cA U_i/r\gamma_p \|_p
	\leq \sum_i  \|S \cA U_i/r\|_p
	\leq \alpha.
\end{eqnarray*}
Combining (\ref{eqn:first}) and (\ref{eqn:third}), we have that with probability at least $1-\prod_{i=1}^q (n_i/w_i) \delta$,
for all $x \in \mathbb{R}^r$,
\begin{eqnarray*}
\|x\|_q \leq \beta \|S \cA U x\|_p
	\leq \beta \sqrt{3} r (tw)^{|1/p-1/2|} \|\cA U\frac{1}{r\gamma_p}  x\|_p.
\end{eqnarray*}
Hence $\cA U/(r\gamma_p)$ is an 
$(\alpha, \beta \sqrt{3} r (tw)^{|1/p-1/2|} , p)$-well-conditioned
basis. 
\end{proof}
 

\bigskip


\restate{thm:lp_running}
\begin{proof}
For notational simplicity, let us denote $n_{[q_1]} = \prod_{i=1}^{q_1} n_i$, $n_{[q]\backslash [q_1]} = \prod_{i=q_1+1}^{q} n_1 $, $d_{[q_1]} = \prod_{i=1}^{q_1} d_i$, and $d_{[q]\backslash [q_1]} = \prod_{i=q_1+1}^{q} d_i$.
	For any row-block $A_{i_1}^{1}\otimes \ldots \otimes A_{i_q}^{(q)}$, computing $S_{i1i2...i_q}(A_{i_1}^{1}\otimes \ldots \otimes A_{i_q}^{(q-1)})$ takes $O(d(\sum_{k=1}^q nnz(A_{i_k}^{(k)})) + dqm\log(m))$ (see Sec~\ref{sec:background}). Hence for $S\cA$, it takes:
	\begin{align*}
	\left( d \sum_{k=1}^q \nnz(A_k) \prod_{i \in [q] \backslash \{k\} }^q  n_i/w_i  \right) + \left( dqm\log(m) \prod_{i=1}^{q} n_i/w_i \right)
	\end{align*}
where $S \in \R^{(m\prod_{i=1}^q(n_i/w_i))  \times \prod_{i=1}^q w_i} $ and  $m \geq 100\prod_{i=1}^q d_i^2 (2+3^q)/\eps^2 = O(\poly(d/\epsilon))$. We need to compute an orthogonal factorization $S\cA=QR_\cA $ in $O(q m d^2)$ and then compute $U=R_\cA^{-1}$ in $O(d^3)$ time. Hence the total running time of Algorithm {\sf Condition}($\cA$) is $O(q m d^2+d^3)$. Thus the total running time of computing $S\cA$ and Condition(\cA) is 
\begin{align*}
O\left( \left( \sum_{k=1}^q \nnz (A_k) \prod_{i\in [q] \backslash \{k\}}^q n_i/w_i \right) + \left(\prod_{i=1}^q n_i/w_i \right)\poly(d/\epsilon) + q m d^2+d^3 \right), 
\end{align*}

We will compute $UG$ in $O( d^2 \log n )$ time. We compute $\widetilde E=E (A_{q_1+1}\otimes\ldots\otimes A_{q})^T $ in $O(d n_{[q]\backslash [q_1]})$ time. 

Then we can compute  $R (A_1 \otimes \cdots \otimes A_{q_1})  \widetilde E_j$ in $O(n_{[q_1]}d_{[q_1]}\log n + d_{[q_1]}n_{[q] \backslash[q_1]} \log n )$ time.

Since computation of the median $\lambda_i$ takes $O(\log n)$ time, computing all $\lambda_i$ and then $\lambda_e$ takes $O(n_{[q] \backslash[q_1]}\log n )$ time. 

As $\cA UG$ has $O(\log n)$ columns, we need to compute $\lambda_e$ for each $\cA UG$ using the above procedure and hence it takes in total $O( d(n_{[q_1]}+n_{[q] \backslash[q_1]}) \log^2 n)$ time. 

Sampling a column of $\cA UG$ using $\lambda_e$ takes $O(\log n)$ time, sampling an entry in $M$ takes in total $O(n_{[q_1]} + n_{[q] \backslash[q_1]})$ time. 

Since we need $\sqrt{\prod_{k=1}^q w_k} \poly(  r )$ samples to select rows,  the running time is $  d (n_{[q_1]} + n_{[q]\backslash [q_1] } ) \log^2 n \cdot \sqrt{\prod_{k=1}^q w_k} \poly(  r )$.

Now for simplicity, we set $q = 2$, $n_i = n_0$ for $i\in [2]$. Note that it is optimal to choose $w_i = w$ for $i\in[2]$. Substituting $q = 2, n_i =n_0$ and $w_i = w$, we that the total running time of Alg.~\ref{alg:l1tensorregression}:
\begin{align*}
O\left( d w^{-1} n_0 (\nnz(A_1) + \nnz(A_2)) + w^{-2} n_0^2 \poly(d/\epsilon) + w n_0 \poly(d)\log(n) \right).
\end{align*}
For dense $A_1$ and $A_2$, $\nnz(A_1)+\nnz(A_2) = O(n_0)$ time, and so ignoring $\poly$ and $\log$ terms that do not depend on $n_0$,  the total running time can be simplified to:
\begin{align*}
O( w^{-1}n_0^2 + wn_0).
\end{align*} 
Setting $w = \sqrt{n_0}$, we can minimize the above running time to $O(n_0^{3/2})$, which is faster than the $n_0^2$ time for solving the problem by forming $A_1\otimes A_2$.
\end{proof}


\end{document}